\documentclass[conference]{IEEEtran}

\usepackage[T1]{fontenc}
\def\doi#1{\href{https://doi.org/\detokenize{#1}}{\url{https://doi.org/\detokenize{#1}}}}

\usepackage{hyperref}
\usepackage{amsmath}
\usepackage{amsfonts}
\usepackage{amssymb}
\usepackage{amsthm}
\usepackage{mathdots}
\usepackage{color}
\usepackage{proof}
\usepackage{bussproofs}
\usepackage{thmtools}
\usepackage{xspace}
\usepackage[shortlabels]{enumitem}
\usepackage{tikz}
\usepackage{graphicx}
\usepackage{adjustbox}



\newcommand{\Lan}{\mathcal{L}}

\newcommand{\For}[1][]{\mathrm{For}_{#1}}
\newcommand{\imp}{\supset}
\newcommand{\coimp}{\prec}
\newcommand{\Sa}{\Rightarrow} 
\newcommand{\Ra}{\Sa}

\newcommand{\BiInt}{\mathbf{BiInt}}

\newcommand{\SF}{\mathbf{S5}}
\newcommand{\LK}{\mathbf{LK}}
\newcommand{\LJ}{\mathbf{LJ}}
\newcommand{\GF}{\mathbf{G4}}

\newcommand{\RR}{\mathbf R}



\newcommand{\seq}{\Rightarrow}

\newcommand{\cont}[1]{\mathcal{C}(#1)}



\newcommand{\trans}[2]{
\vspace{3pt}
\noindent\fbox{
\begin{minipage}{.47\textwidth}
    #1\\
    \noindent\textbf{precondition:} #2
\end{minipage}}
\vspace{3pt}\\
}

\newcommand{\subst}[2]{\lbrack #1/#2\rbrack}

\newcommand{\context}[1]{\mathcal{C}(#1)}

\newcommand{\lleft}{\mathtt{l}}
\newcommand{\lright}{\mathtt{r}}
\newcommand{\lx}{\mathtt{x}}
\newcommand{\lbleft}[1]{\lleft#1}
\newcommand{\lbright}[1]{\lright#1}
\newcommand{\lbx}[1]{\lx#1}

\newcommand{\tms}[2]{\{#1\}^{#2}}
\newcommand{\mv}[1]{{#1}}

\newcommand{\SC}{\mathbb{S}}
\newcommand{\supp}[1]{\mathrm{supp}(#1)}


\newcommand{\princred}{\emph{principal reductions}}
\newcommand{\inversion}{\emph{reduction by inversion}}
\newcommand{\rightshift}{\emph{antecedent shift}}
\newcommand{\leftshift}{\emph{succedent shift}}
\newcommand{\leftanalytic}{\emph{analytic cutting on the left}}
\newcommand{\rightanalytic}{\emph{analytic cutting on the right}}
\newcommand{\renaming}{\emph{renaming of variables}}

\newcommand{\part}[4]{(#1:#2),(#3:#4)}
\newcommand{\Var}{\mathsf{Var}}
\newcommand{\subf}{\mathsf{subf}}

\newcommand{\ly}{\mathtt{y}}
\newcommand{\lby}[1]{\ly#1}

\newcommand{\red}[1]{\mathcal{E}_{#1}}

\newcommand{\hide}[1]{}

\newcommand{\mydots}[1]
{
\tikzset{every picture/.style={line width=0.75pt}} 
\begin{tikzpicture}[x=0.3pt,y=0.42pt,yscale=-1,xscale=1]


\draw  [draw opacity=0][fill={rgb, 255:red, 224; green, 224; blue, 224 }  ,fill opacity=1 ] (160,-10) -- (160,30) -- (71,58) -- (1,18) -- (-10,-10) -- cycle ;

\draw (82,25) node [anchor=base][inner sep=0.75pt]   [align=center] {\mbox{\footnotesize$#1$}};

\end{tikzpicture}
}

\usepackage{xcolor}
\newcommand{\rr}[1]{\\\textcolor{red}{#1}\\}

\newtheorem{definition}{Definition}
\newtheorem{lemma}[definition]{Lemma}
\newtheorem{corollary}[definition]{Corollary}
\newtheorem{example}[definition]{Example}
\newtheorem{theorem}[definition]{Theorem}
\newtheorem{remark}[definition]{Remark}
\newtheorem{fact}[definition]{Fact}

\newtheorem{claim}[definition]{Claim}

\begin{document}

\title{Cut-restriction: from cuts to analytic cuts}

\author{
\IEEEauthorblockN{Agata Ciabattoni}
\IEEEauthorblockA{\textit{TU Vienna} \\
Vienna, Austria \\
agata@logic.at}
\and
\IEEEauthorblockN{Timo Lang}
\IEEEauthorblockA{\textit{University College London} \\
London, United Kingdom \\
timo.lang@ucl.ac.uk}
\and
\IEEEauthorblockN{Revantha Ramanayake}
\IEEEauthorblockA{\textit{University of Groningen} \\
Groningen, Netherlands \\
d.r.s.ramanayake@rug.nl}
}


\maketitle

\begin{abstract}
Cut-elimination is the bedrock of  proof theory
with a multitude of applications from computational interpretations to proof analysis. It is also the starting point for 
important meta-theoretical investigations into decidability, complexity, disjunction property, interpolation, and more.
Unfortunately cut-elimination does not hold for the sequent calculi of most non-classical logics. 
It is well-known that the key to applications is the subformula property (a typical consequence of cut-elimination) rather than cut-elimination itself. 
With this in mind, we 
introduce cut-restriction, a
procedure to restrict arbitrary cuts to analytic cuts (when elimination is not possible). The algorithm applies to all sequent calculi satisfying language-independent and  simple-to-check conditions,
and it is obtained by adapting age-old cut-elimination.
%
Our work encompasses 
existing results in a uniform way, subsumes Gentzen's cut-elimination,
and 
establishes new analytic cut properties.
\end{abstract}


\section{Introduction}

The fundamental result in proof theory is cut-elimination. It is the algorithm that syntactically eliminates cuts from a sequent calculus proof and leads to a proof that has the \emph{subformula property}, i.e., it only contains formulas that are subformulas of the final statement.
Though potentially larger in size, cut-free proofs are much better behaved and more amenable to meta-theoretic investigation, as the space of proofs under consideration is greatly constrained. 
Gentzen’s motivation in the 1930’s was a ``finitistic" proof of consistency of arithmetic but the influence of cut-elimination goes far beyond that. 
From its interpretation as computation under the proofs-as-programs correspondence (e.g.~\cite{CH06}) 
to its role in proof analysis  (e.g.~\cite{BaazHLRS08}), 
it is by far the most fundamental proof transformation.
Cut-free calculi--the offspring of cut-elimination--are widely applied to prove metalogical properties of the underlying logics (e.g., decidability, upper bounds, various flavours of interpolation, and disjunction properties), and they are key to developing automated reasoning methods.

Given the influence of Gentzen's seminal result it was inevitable that it would be extended to more and more logics.
Cut-elimination was originally proved for the sequent calculi for classical and intuitionistic logic IL but interesting and useful logics continue to be introduced and so the program of developing cut-free calculi via cut-elimination was extended. The first significant obstacle was encountered in the early 1950's: how to eliminate cuts in the proof calculus for the modal logic $\SF$? 
In 1968, Mints~\cite{Min68} solved the problem but not using the sequent calculus: he proved cut-elimination for $\SF$ using a generalisation of the sequent calculus known today as the hypersequent calculus, e.g.~\cite{Avr96}. This ushered in a new era in proof theory: generalise the sequent calculus to obtain proof calculi with cut-elimination for various  logics of interest. 
Nowadays, numerous proof formalisms can be found that generalise the sequent calculus: nested, labelled, bunched, tree-hypersequent, display sequent calculi, and many more.

Let us return to the sequent calculus. 
%
While cut-elimination implies the subformula property, it is \textit{not} a necessary condition.
%
In particular, the subformula property still holds if one accepts \textit{analytic cuts}~\cite{Smu68}, i.e. a cut-rule where the cut-formula is a subformula of the conclusion. 
Indeed, Kowalski and Ono~\cite{KowOno17} show that the subformula property 
is \emph{equivalent} to the analytic cut property (every theorem has a proof whose cuts are analytic).
Notably, many decidability, complexity and interpolation arguments go through in the presence of analytic cuts (\cite{Tak92,OnoSano}). For instance,
Kowalski and Ono~\cite{KowOno17} show a host of results such as Craig interpolation, Halld\'{e}n completeness and Maximova variable separation, utilising the analytic cut property. 
As Smullyan~\cite{Smu68} describes it, ‘the real importance of cut-free proofs is not the elimination of cuts \textit{per se}, but rather that such proofs obey the subformula principle’.
%


As with generalised proof formalisms, the analytic cut property (or mild relaxations, as 
in~\cite{Tak01,Tak19,Tak20}) can serve as a substitute when cut-elimination in the sequent calculus fails. Moreover,
there are advantages in retaining the sequent calculus as a framework for meta-logical investigations over using a generalised formalism: it is the sequent calculus rather than any of these other formalisms that remains familiar to anyone with a passing knowledge of formal logic. Aside from its association with Gentzen's famous result, compelling reasons are its simplicity, the ready identifiability of a sequent with logical consequence (assumptions are on the left of the sequent and consequences on the right), the fact that the additional structure/meta-language in extended formalisms usually complicate meta-logical investigations, and the numerous results that have been proved using the sequent calculus.


What is missing is a general methodology for transforming sequent calculus proofs with arbitrary cuts into proofs with analytic cuts. 
In fact, 
almost all proofs of the subformula property in the literature have been semantic,  e.g.~\cite{Fit78,DAgoMon94,KowOno17,AvrLahav}.
%
%
However, semantic arguments lack an important feature (present in Gentzen's cut-elimination): it is not possible to extract an algorithm to eliminate non-analytic cuts from a proof in a stepwise manner. Indeed, the constructive nature of cut-elimination was important to satisfy Hilbert's requirement for a ``finitistic" proof of consistency. Even beyond this, it is crucial for ordinal analysis and computational interpretations.

The singular exception to the prevalent semantic proofs is Takano's 1992 paper~\cite{Tak92}, 
where he took up the challenge to obtain the analytic cut property by syntactic means for the sequent calculus of $\SF$ and some logics in its vicinity. He then proved analogous results for more modal logics (\cite{Tak01,Tak19,Tak20}) but this time via semantic method. While the result of~\cite{Tak92} is well-known, the syntactic method presented there seems to be virtually unknown.
This might be due to its intricate arguments, and because it is unclear how to visualise the transformations in~\cite{Tak92} and how they fit together (unlike Gentzen's transformations). Consequently, it is hard to determine whether the arguments transfer to other logics. 
Another syntactic solution for $\SF$ was recently obtained in Ciabattoni \textit{et al.}~\cite{CiaLanRam21} where proofs are first embedded into a hypersequent calculus, cut-elimination is applied there, and the hypersequent structure is then systematically removed in favour of analytic cuts. This gives an in-principle algorithm in the sequent calculus but the complex sequence of transformations in no way resembles Gentzen's reductions, and a high-level `picture' of the transformations is once again lacking.


The present paper provides a methodology to transform sequent calculus proofs with arbitrary cuts into proofs with analytic cuts.
We call the resulting method \emph{cut-restriction}, of which Gentzen's cut-elimination is a special, boundary case.
We focus on a class of propositional sequent calculi
well-behaved for our purpose and large enough to include a variety of 
interesting calculi.  
%
The algorithm of cut-restriction proceeds in a stepwise manner and   
is formulated using language-independent conditions along the lines of Belnap's~\cite{Bel82} 
conditions for cut-elimination. Checking these conditions is straightforward and does not require knowledge of the algorithm. 



Our work brings together scattered results in the literature, and provides a uniform way to prove
the analytic cut-property for a host of calculi (in particular, simplifying Takano's argument for $\SF$), including
the calculus {\bf L4}~\cite{Ono77}
for the strongest S5-type intuitionistic modal logic {\bf G4}, and its extension with a coimplication connective.
This work also
resolves the following open question for bi-intuitionistic logic (a conservative extension of IL):
\begin{quote}
The sequent calculus $\BiInt$~\cite{Rau74} fails cut-elimination but it is complete with analytic cuts. Semantic proofs of this result were presented independently by Kowalski and Ono~\cite{KowOno17}, and Avron and Lahav~\cite{AvrLahav}. 
Pinto and Uustalu~\cite{PintoU18} prove syntactically that certain infinitely many (co)implicational cut-formulas suffice for completeness but left open the problem of finding a constructive procedure for the analytic cut property in $\BiInt$.
\end{quote}


Summing up, the contribution of this paper is twofold.
\begin{itemize}
\item  We introduce the first proof transformation 
reducing arbitrary cuts to analytic cuts that applies to a large class of propositional sequent calculi. In doing so, we extend Gentzen's age-old transformations.
\item We provide easy-to-check sufficient conditions on the sequent calculus for analytic cut property.
\end{itemize}


%

\smallskip
\subsubsection*{Cut-restriction needs a novel idea}

At first sight it might seem reasonable to assume that cut restriction follows from some slight adaptation of cut-elimination.
We illustrate using the case of $\SF$ that this is \textit{not} the case.
The following presumes some knowledge of cut-elimination; the reader unfamiliar with this terminology is referred to Section~\ref{sec-idea}. First consider the cut below that is well-known~\cite{Ohn59} to be not eliminable in $\SF$: 
\begin{center}
\AxiomC{$\Box\lnot p\Ra \Box\lnot p$}
\RightLabel{(${\lnot}{\mathtt{r}}$)}
\UnaryInfC{$\Ra \lnot\Box\lnot p,\Box\lnot p$}
\RightLabel{$(5)$}
\UnaryInfC{$\Ra \Box\lnot\Box\lnot p,\Box\lnot p$}
\AxiomC{$p\Ra p$}
\RightLabel{(${\lnot}{\mathtt{l}}$)}
\UnaryInfC{$\lnot p,p\Ra$}
\RightLabel{$(T)$}
\UnaryInfC{$\Box\lnot p,p\Ra$}
\RightLabel{cut}
\BinaryInfC{$p\Ra \Box\lnot\Box\lnot p$}
\DisplayProof
\end{center}
The cut-formula~$\Box\lnot p$ is principal in the right premise of cut by the $(T)$ rule and it is non-principal (i.e. context) in the left premise.
Therefore the usual move in cut-elimination would be to lift the cut upward in the left premise which means a cut on
$\Ra \lnot\Box\lnot p,\Box\lnot p$ and $\Box\lnot p,p\Ra$ yielding $p \Ra \lnot\Box\lnot p$. However, we cannot apply $(5)$ now since that rule requires that every context formula is boxed.
%
Note that the cut in the proof diagram is analytic because $\Box\lnot p$ is a subformula of $p\Ra \Box\lnot\Box\lnot p$. We conclude that we cannot lift the cut upwards as in Gentzen's cut-elimination but if we are prepared to accept analytic cuts then 
nothing more needs to be done here. Of course, it still remains to show that analytic cuts suffice in \textit{all} situations. We want a constructive proof so a natural idea is to generalise cut-elimination by considering an arbitrary topmost non-analytic cut and seek transformations that make the cut-formula smaller until the cut disappears or
becomes analytic. However, this idea does not hold up in practice. Consider:
\begin{center}
\AxiomC{$\Box\Gamma\Ra \Box A, B $}
\RightLabel{$(5)$}
\UnaryInfC{$\Box\Gamma\Ra\Box A,\Box B$}
\AxiomC{$A,\Sigma\Ra\Pi$}
\RightLabel{$(T)$}
\UnaryInfC{$\Box A,\Sigma\Ra\Pi$}
\RightLabel{non-analytic cut}
\BinaryInfC{$\Box\Gamma,\Sigma\Ra\Box B,\Pi$}
\DisplayProof
\end{center}
The cut-formula is principal in the right premise by $(T)$ and non-principal in the left premise.
As in the previous example, the cut cannot be lifted up in either premise.
There we accepted the cut as it was analytic, but how to proceed if it is not? A new idea is needed. Even the briefest consideration of~\cite{Tak92} will provide an indication of the intricacy of Takano's solution.

\smallskip
The paper is organised as follows. The idea behind cut-restriction is discussed informally in Sec.~\ref{sec-idea} using $\BiInt$ as a case study.
Sec.~\ref{Sec:preliminaries} introduces the 
class of calculi we deal with ({\em standard sequent calculi}).
These are sequent calculi having all structural rules, and whose logical rules are analytic and introduce one connective at a time.
The main ingredients for the cut-restriction proof to go through are identified in Sec.~\ref{Sec:conditions}; they are formulated in terms of syntactic sufficient conditions to be satisfied by a standard sequent calculus. The general proof of cut restriction is contained in Sections~\ref{sec:cut-restriction} and~\ref{sec:mainproof}, and examples of calculi to which it applies are presented in Sec.~\ref{sec:applications}. Sec.~\ref{sec:classone}shows how Gentzen's cut-elimination is a special case of cut restriction.  

\section{A guided example}\label{sec-idea}

\newcommand{\topgamma}{top(\gamma)}
\newcommand{\topdeltaone}{top_1(\delta)}
\newcommand{\topdeltatwo}{top_2(\delta)}
\newcommand{\botgamma}{bot(\gamma)}
\newcommand{\botdelta}{bot(\delta)}


\noindent
{\em Gentzen's cut-elimination argument:} Stepwise reductions (`simplifying transformations') replace a cut with smaller cuts with respect to a well-founded relation. The cut-free proof follows from a transfinite induction.
The stepwise reductions come in two flavours: \textit{permutation} and
\textit{principal} reduction. The former shifts a cut one step upwards in either the left or the right premise.
Following repeated applications, the situation is reached of a cut in which the cut-formula is \textit{principal} (i.e.
created by the rule immediately above it) in both premises. The principal reduction is now used to replace that cut with cuts on proper subformulas. Principal reductions depend on the shape of the introduction rules and in some cases they can be hard to find. This is what happens with the modal rule in provability logic $GL$, for example: the change in polarity of the diagonal formula from conclusion to premise necessitates a highly intricate and customised principal reduction~\cite{Val83}.

Here we will consider cut-restriction for sequent calculi in which principal reductions are unproblematic. Therefore we shift our attention to permutation reductions. 

%
%
\emph{Permutation reductions fail}
if a rule 
cannot be permuted with the cut that follows it. The reason for the failure is that the rule 
cannot be applied \textit{after} the cut because the new premises 
conflict with the context restrictions of the rule.

Permutation reductions are unproblematic for 
Gentzen's calculus $\LK$ for classical logic as there are no context restrictions. 
In Maehara's calculus for intuitionistic logic~\cite{Takeuti:87}---a multiple-conclusion sequent calculus
obtained by replacing the right implication rule in $\LK$ with that of the intuitionistic calculus $\LJ$ (cf. $(\to \lright)$ in Fig.~\ref{fig:examplerules})
---some permutation reductions 
\textit{do} fail as $(\to \lright)$  does not permit any context on the right (the principal formula must appear alone).
E.g., try to permute the following cut upwards in the right premise:
\begin{center}
\AxiomC{\mydots{\gamma}}
\noLine
\UnaryInfC{$\Gamma \seq \Delta, A \to B$}
\AxiomC{\mydots{\delta}}
\noLine
\UnaryInfC{$C, \Sigma, A \to B \seq D$}
\RightLabel{$(\to \lright)$}
\UnaryInfC{$\Sigma, A \to B \seq C\imp D$}
\RightLabel{$(cut)$}
\BinaryInfC{$\Gamma,\Sigma \seq \Delta, C\to D$}
\DisplayProof
\end{center}
This means the following transformation:
\begin{center}
\AxiomC{\mydots{\gamma}}
\noLine
\UnaryInfC{$\Gamma \seq \Delta, A \to B$}
\AxiomC{\mydots{\delta}}
\noLine
\UnaryInfC{$C, \Sigma, A \to B \seq D$}
\RightLabel{$(cut)$}
\BinaryInfC{$\Gamma,C, \Sigma \seq \Delta, D$}
\DisplayProof
\end{center}
We are stuck as we cannot apply $(\to\lright)$ to this sequent when $\Delta$ is non-empty as the rule does not permit any right context.

The solution here is known: repeatedly shift this cut upwards in the left premise $\gamma$ until the cut-formula is weakened or introduced by $(\to \lright)$ (we refer to it as \textit{critical inference}).
$$\infer[(\to \lright)]{\Gamma'\Rightarrow A\to B}{A,
 \Gamma'\Rightarrow B} $$
 In the latter case, it is crucial that $A\to B$ is the sole formula on the right.
Only at this point we do lift the cut upward in the right premise. It does not cause any issue since the critical inference does not introduce any context on the right. We ultimately obtain a cut 
whose cut-formula is principal in both premises. Now use a principal reduction to replace the cut by smaller cuts (on $A$ and $B$) and cut-elimination follows.

To set the scene for later, view this as first tracing the cut-formula till principal in both premises and applying the principal reduction. Now, proceed down the trace in the right premise, and then the left. A sufficient condition for the latter is that for every rule instance, if it has a context formula $A\to B$ on the right then its substitution by arbitrary multisets on the left \textit{and} right is also a rule instance (we say that $\to$ is \textit{rightable} Def.~\ref{def:substitution}); for the former it is the ability to substitute a context formula $A\to B$ on the left with the context of $(\to\lright)$ (\textit{weakly leftable}) i.e. we are anticipating substitution with the critical inference context.
A special case of Theorem~\ref{thm-classone} states that every standard sequent calculus (Definition~\ref{def:standardSC}) whose every connective is rightable and weakly leftable has cut-elimination.


\subsubsection*{Cut-restriction}
Consider now the sequent calculus $\BiInt$~\cite{Rau74} for bi-intuitionistic logic.
Bi-intuitionistic logic is a conservative extension of intuitionistic logic that introduces a connective $\coimp$ that is residuated with $\lor$ in the same sense that $\to$ and $\land$ are residuated.
%
%
Formally, $\BiInt$ is obtained from the Maehara calculus for IL by 
adding the rules $(\coimp \lleft)$ and $(\coimp \lright)$ in Fig.~\ref{fig:examplerules}. %
Crucially, the $(\coimp\lleft)$ rule permits a context on the right but not on the left. Consequently, the permutation reduction upward in the left premise $\gamma$ that we applied before is not possible in $\BiInt$. 
From the perspective of this paper, the $\to$ connective is not rightable in $\BiInt$ and the sufficient conditions for cut-elimination in Theorem~\ref{thm-classone} are not met. This is not surprising as some theorems of $\BiInt$ have no cut-free proof (see~\cite{KowOno17}).
It is time to move from cut-elimination to cut-restriction up to analytic cuts. Trace the ancestors of the cut-formula all the way to their critical inferences (similar ``tracing back" arguments are used, e.g.,
in cut-elimination proofs for the sequent calculus~\cite{Restall,BaazL06} or
display calculus~\cite{Bel82}).
Critical inferences split the proof into a top and a bottom part.
A simplified situation where there is a single critical inference in each of the left and right premise is shown below. 
\begin{prooftree}
\AxiomC{\mydots{\topgamma}}
\noLine
\UnaryInfC{$\Gamma',A\Sa B$}
\RightLabel{$(\to\!\!\lright)$}
\UnaryInfC{$\Gamma'\seq A\to B$}
\noLine
\UnaryInfC{\mydots{\botgamma}}
\noLine
\UnaryInfC{$\Gamma \seq A \to B, \Delta$}
\AxiomC{\mydots{\topdeltaone}}
\noLine
\UnaryInfC{$\Sigma' \seq A, \Pi'$}
\AxiomC{\hskip -20pt\mydots{\topdeltatwo}}
\noLine
\UnaryInfC{\hskip -30pt$\Sigma', B \seq \Pi'$}
\RightLabel{$(\to\!\!\lleft)$}
\BinaryInfC{$\Sigma', A \to B \seq \Pi'$}
\noLine
\UnaryInfC{\mydots{\botdelta}}
\noLine
\UnaryInfC{$\Sigma, A \to B \seq \Pi$}
\RightLabel{$(cut)$}
\BinaryInfC{$\Gamma,\Sigma \seq \Delta, \Pi$}
\end{prooftree}
%
\begin{figure*}
\label{fig-guided}
\begin{prooftree}
\AxiomC{$\Gamma'\seq \Gamma'$}
\noLine
\UnaryInfC{\mydots{\botgamma \subst{\Gamma'}{A\to B}}}
\noLine
\UnaryInfC{$\Gamma \seq \Gamma', \Delta$}

\AxiomC{\mydots{\topgamma}}
\noLine
\UnaryInfC{$\Gamma',A\Sa B$}
\AxiomC{\mydots{\topdeltaone}}
\noLine
\UnaryInfC{$\Sigma' \seq A, \Pi'$}
\RightLabel{$(cut)$}
\BinaryInfC{$\Gamma',\Sigma\Sa B,\Pi'$}
\AxiomC{\mydots{\topdeltatwo}}
\noLine
\UnaryInfC{$\Sigma', B \seq \Pi'$}
\RightLabel{$(cut)$}
\BinaryInfC{$\Sigma',\Gamma'\Sa\Pi'$}
\noLine
\UnaryInfC{\mydots{\botdelta \subst{\Gamma'}{A\to B}}}
\noLine
\UnaryInfC{$\Sigma,\Gamma'\seq \Pi$}
\RightLabel{$(cut)$}
\BinaryInfC{$\Gamma,\Sigma \seq \Delta, \Pi$}
\end{prooftree}
\caption{Illustration of the transformation for cut-restriction up to analytic cuts in $\BiInt$}
\end{figure*}
To simplify further, assume that $\Gamma'$ consists of a single formula. Now transform this as shown in Fig.~\ref{fig-guided}. In a nutshell:
\begin{itemize}
    \item The original cut on $A\to B$ is replaced by a cut on $\Gamma'$, the context of the critical inference in $\gamma$.   
    \item We replace all ancestors of $A\to B$ in $bot(\delta)$ by $\Gamma'$, leading to a leaf of the form $\Sigma',\Gamma'\Sa\Pi'$. This leaf can be proved via the usual principal case reductions on $top(\gamma)$, $top_1(\delta)$ and $top_2(\delta)$.
     \item We also replace ancestors of $A\to B$ in $bot(\gamma)$ by $\Gamma'$. This leads to a ``trivial'' leaf of the form $\Gamma'\Sa\Gamma'$.
\end{itemize}
Let us explain why the substitutions are well-defined.
\begin{description}
\item[$\botdelta \subst{\Gamma'}{A\to B}$:] 
As $bot(\delta)$ contains $A\to B$ as a context formula on the left, the rule $(\coimp\lleft)$ could not have been applied on that branch. In every other rule instance, we can replace such a $A\to B$ by the context $\Gamma'$ of $(\to\lright)$. This is a property of $\to$ called \emph{weakly leftable}.


\item[$\botgamma \subst{\Gamma'}{A\to B}$:]
As $bot(\gamma)$ contains $A\to B$ as a context formula on the right, the $(\to\lright)$ was not applied on that branch. In other rule instances, we can replace the $A\to B$ by $\Gamma'$. Note $\Gamma'$ originally occurred on the left (so it `swaps sides' here). This property of $\to$ is \emph{inverse rightable}.

\end{description}


In general, the newly introduced cut on $\Gamma'$ is not analytic. However, by suitably preprocessing the proof and selecting the uppermost non-analytic one, we can show that the introduced cut is either analytic, or $\Gamma'$ is a \emph{proper} subformula of $C$. In the latter case, we have improved the situation; we repeatedly transform the proof to eventually obtain an analytic cut.

Moving from the simplified situation above to the general case, two complications arise. 
First, the presence of the contraction rule means that we might have to trace more than one occurrence of $A\to B$ so the transformation sketched above has to be modified accordingly. This is done in a rather standard way by using the \emph{multicut rule} instead of the cut rule.
Second, $\gamma$ may contain multiple critical inferences, all with different contexts $\Gamma'$ containing any number (including zero) of formulas. We will introduce cuts on \emph{all} these formulas. 
%

%
In summary, we retain the principal reductions, replace permutation reductions by tracing the predecessors of the cut-formulas along branches ensuring that they remain well-defined when substituted by new formulas (from contexts of critical inferences), and apply cut to remove these formulas.




\section{Standard Calculi}
\label{Sec:preliminaries}
We start by formalising the class of calculi we consider.

Fix a language $\Lan$ consisting of logical connectives, each with some integer arity $\geq 0$.  A connective of arity $0$ is called a \emph{constant}. The set $\For$ of formulas is generated in the usual way from variables ($x,y,\ldots$) and the connectives in $\Lan$. 
The \emph{principal connective} in a formula is its outermost connective. A formula with principal connective $\circ$ will be denoted $A^\circ$.

To simplify the notation in the cut-reduction proof, we will work with \emph{labelled formulas} (\emph{$\ell$-formulas} for short) of the form $\lbleft{A}$ and $\lbright{A}$. Intuitively, $\lbleft{A}$ denotes an occurrence of the formula $A$ on the left (antecedent) of a sequent and $\lbright{A}$ denotes an occurrence on the right (succedent) (cf.\ ``signed formulas''~\cite{AvrLahav}). The notion of a \emph{(proper) subformula} is lifted to $\ell$-formulas by ignoring the labels. 
%
We identify Gentzen \emph{sequents }$A_1\ldots,A_n\seq B_1,\ldots, B_m$ with multisets of $\ell$-formulas ($\ell$-multisets for short) $\{\lbleft{A_1},\ldots,
\lbleft{A_n},\lbright{B_1},$ $\ldots,$ $\lbright{B_m}\}$. 

Uppercase greek letters ($\Gamma,\Delta,\Sigma,\ldots)$ denote $\ell$-multisets, and uppercase latin letters ($A,B,C\ldots$) are used both for formulas and $\ell$-formulas. 
$\ell$-multisets containing formulas all labelled $\lleft$ (resp. $\lright$) are denoted $\lbleft{\Gamma}$ (resp. $\lbright{\Gamma}$). $\Gamma^p$, $p \geq 0$ is the $p$-fold union of the multiset $\Gamma$ with itself (e.g.\ $\{A,A,B\}^2=\{A,A,A,A,B,B\}$), while $\Gamma,\Delta$ is the multiset union of $\Gamma$ and $\Delta$.
 By convention, $\Gamma^0=\emptyset$. 
The support of $\Gamma$, denoted by $\supp{\Gamma}$, is the set of elements that occur at least once in $\Gamma$.
%
%
%
%
Also, $\Gamma^\ast$ is a  \emph{contraction} of $\Gamma$ if $\supp{\Gamma}=\supp{\Gamma^\ast}$ and every element appears as least as often in $\Gamma$ as it does in $\Gamma^\ast$. 


The class of sequent rules under consideration is formalised below in an abstract manner.
These are logical rules having a single \emph{principal formula} in their conclusion, and whose premises contain proper subformulas of this formula 
(\emph{auxiliary formulas}).
%
The rules have an \emph{additive} context, i.e.\ the same additional formulas appear both in the premises and in the conclusion of each rule. Crucially, this context can be restricted, meaning that only certain formulas are allowed. 





\begin{definition}[simple  rules]\label{standard-rule}
Let $\circ\in\Lan$. A \emph{simple left rule $\RR$ for $\circ$} is a pair $(\Lambda(\RR),\context{\RR})$ such that:
\begin{itemize}
    \item $\Lambda(\RR)$ is a set of tuples of the form $(\lbleft{C^\circ}, \Lambda_1,\ldots,\Lambda_M)$ for some fixed arity $M\geq 0$. $\lbleft{C^\circ}$ is the \emph{principal formula}, and each $\Lambda_{m}$ ($1\leq m\leq M$) is an $\ell$-multiset of proper subformulas of $C^\circ$ called \emph{auxiliary formulas}.
    \item $\context{\RR}$ is a set of $\ell$-formulas called \emph{context restriction}
\end{itemize}

An \emph{instance} of $\RR$ then is a figure
\begin{prooftree}
    \AxiomC{$\Gamma,\Lambda_1$}
    \AxiomC{$\ldots$}
    \AxiomC{$\Gamma,\Lambda_M$}
    \TrinaryInfC{$\Gamma,\lbleft{C^\circ}$}
\end{prooftree}
where $(\lbleft{C^\circ}, \Lambda_1,\ldots,\Lambda_m)\in\Lambda(\RR)$ and $\Gamma$ is a multiset of $\ell$-formulas from $\context{\RR}$; we call $\Gamma$ the \emph{context} of the instance.

We require that $\Lambda(\RR)$ and $\context{\RR}$ are closed under uniform substitution, and $\Lambda(\RR)$ is total in the first component.\footnotemark{}
%
\emph{Simple right rules} are defined analogously: replace $\lbleft{C^{\circ}}$ with $\lbright{C^{\circ}}$.
\end{definition}


%
\footnotetext{\emph{Uniform substitution}: if we replace all occurrences of a variable $x$ by a formula $A$ in an instance of $\Lambda(\RR)$ or $\context{\RR}$, we get  an instance of $\Lambda(\RR)$ resp.\ $\context{\RR}$. 
\emph{Total in the first component}: for every formula $C^\circ$, there is a tuple in $\Lambda(\RR)$ whose first component is $\lbleft{C^\circ}$.}
%


%
A context restriction $\context{\RR}$ prescribes the type of formulas that can be used as a context in rule instances. 
We say that \emph{$\RR$ has no context restriction} if 
$\context{\RR}$ is maximal, that is if 
$\context{\RR}$ is the set of all $\ell$-formulas.

Rules are usually presented by a schematic figure rather than a formal specification. Fig.~\ref{fig:examplerules} contains many such schemata (in the standard two-sided presentation). The example below illustrates how these fit into the framework of Def.~\ref{standard-rule}.   

\begin{example}
Consider the rules in Fig.~\ref{fig:examplerules}.
Neither $(\land\lleft)$ nor $(\land\lright)$ has a context restriction. We have
\begin{align*}
    \Lambda(\land\lleft)=\{(\lbleft{(A\land B)},\{\lbleft{A},\lbleft{B}\})\mid A,B\in\For\}&\text{ and } \\
     \Lambda(\land\lright)=\{(\lbright{(A\land B)},\{\lbright{A}\},\{\lbright{B}\})\mid A,B\in\For\}&.
\end{align*}
Context restrictions of the rules $(\lright\to)_M$, $(5)$ and $(\to\lright)_\Box$ are 
\begin{align*}
    \{\lbleft{F}\mid F\in\For\}&,
    \\
    \{\lbleft{\Box F}\mid F\in\For\}\cup \{\lbright{\Box F}\mid F\in\For\}&\text{, and}
    \\
    \{
    \lbleft{F}\mid F\in\For\}\cup \{\lbright{\Box F}\mid F\in\For
    \}&\text{ respectively}.
\end{align*}
The usual rules for the constant $\bot$ (once again, no context restriction) are simple rules with
\begin{align*}
    \Lambda(\bot\lleft)=\{(\lbleft{\bot})\}\text{ and } 
     \Lambda(\bot\lright)=\{(\lbright{\bot},\emptyset)\}.
\end{align*}
\end{example}

\begin{definition}[standard sequent calculus]
\label{def:standardSC}
A \emph{standard $\Lan$-calculus} $\SC$ consists of the \emph{initial sequents} 
\begin{center}
$\lbleft{x},\lbright{x}\quad (id)$
\end{center}
where $x$ is any variable, 
together with:
\begin{itemize}
    \item the \emph{structural rules} of weakening~$w$, contraction~$c$ and \emph{multicut} $mcut$    
    \[
\infer[(w)]{\Gamma,\Delta}{\Gamma}
\qquad
\infer[(c)]{\Gamma^\ast}{\Gamma}
\qquad
\infer[(mcut)]{\Gamma,\Delta}
    {
    \Gamma,\{\lbright{C}\}^p
    &
    \Delta,\{\lbleft{C}\}^q
    }
\]
where $\Gamma^\ast$ is a contraction of $\Gamma$ and $p,q\neq 0$.
    \item a 
    simple left rule $(\circ\lleft)$
    and a simple right rule $(\circ\lright)$ for every $\circ\in\Lan$
\end{itemize}
\end{definition}

\noindent The formula $C$ in $(mcut)$ is called the \emph{cut formula}.



\begin{example}[some standard calculi]\label{ex:calculi} Let $\Lan_0=\{\land,\lor,\to,\bot\}$. 
\begin{itemize}
    \item Gentzen's calculus $\LK$ for classical propositional logic (consisting
    of the simple rules in the first two rows of Fig.~\ref{fig:examplerules}) is a standard $\Lan_0$-calculus.
    \item Maehara's multiple-conclusion calculus for intuitionistic logic~\cite{Takeuti:87} is obtained by replacing in $\LK$ the rule $(\to\lright)$ with $(\to\lright)_M$ (Fig.~\ref{fig:examplerules}) of intuitionistic logic.
    \item The $\mathcal{L}_0\cup\{\coimp\}$-calculus $\BiInt$~\cite{Rau74} for bi-intuitionistic logic extends Maehara's calculus with the rules $(\coimp \lleft)$ and $(\coimp \lright)$ in Fig.~\ref{fig:examplerules}.
    \item The $\mathcal{L}_0\cup\{\Box\}$-calculus $\SF$ is obtained by adding to the $\LK$ calculus the rules $(T)$ and $(5)$.
\end{itemize}
\end{example}
It is useful to distinguish between proofs and deductions.

\begin{definition}[deductions and proofs in standard calculi] 
A \emph{deduction} of $\Gamma$ from $\Omega$ in a standard calculus is a tree of sequents rooted in $\Gamma$ (the \emph{endsequent}) that is composed of rule instances, and every leaf is either an initial sequent or contained in $\Omega$. 
A deduction from $\Omega=\emptyset$ is called a \emph{proof}. $\Gamma$ is \emph{provable} if there is a proof with endsequent $\Gamma$.
\end{definition}

\begin{definition}[analytic cut]
An instance of $(mcut)$ is \emph{analytic} if the cut formula is a subformula of some formula in the conclusion of the instance.
\end{definition}

\begin{definition}
    A deduction is \emph{cut-free} if it does not use the rule $(cut)$. A deduction is \emph{locally analytic} if all instances of cut in it are analytic.
$\SC$ \emph{admits cut-elimination} if every provable sequent has a cut-free proof. $\SC$ has the \emph{analytic cut property} if every provable sequent has a locally analytic proof. 
\end{definition}

It is immediate that locally analytic deductions in a standard calculus have the 
\emph{subformula property} (every formula occurring in the deduction is a subformula of the endsequent). 

\begin{figure*}
\hide{
\fbox{
\begin{minipage}{\textwidth}
\noindent
\[
\begin{array}{cccc}
\infer[(\bot\lleft)]{\Gamma,\lbleft{\bot}}{}
&
\infer[(\bot\lright)]{\Gamma,\lbright{\bot}}{\Gamma}
&
\infer[(\land\lleft)]{\Gamma,\lbleft{(A\land B)}}{\Gamma,\lbleft{A},\lbleft{B}}
&
\infer[(\land\lright)]{\Gamma,\lbright{(A\land B)}}
    {
    \Gamma,\lbright{A}
    &
    \Gamma,\lbright{B}
    }
\\\\
\infer[(\lor\lleft)]{\Gamma,\lbleft{(A\lor B)}}
    {
    \Gamma,\lbleft{A}
    &
    \Gamma,\lbleft{B}
    }
&
\infer[(\lor\lright)]{\Gamma,\lbright{(A\lor B)}}
    {\Gamma,\lbright{A},\lbright{B}}
&
\infer[(\to \lleft)]{\Gamma, \lbleft{(A \to B)}}
    {
    \Gamma, \lbright{A}
    &
    {\Gamma}, \lbleft{B}
    }
&
  \infer[(\to \lright)]{{\Gamma}, \lbright{(A\to B)}}
    {
    \Gamma, \lbleft{A}, \lbright{B}
    } 
\\\\


 \infer[(\to \lright)_M]{\lbleft{\Gamma}, \lbright{(A\to B)}}
    {
    \lbleft{\Gamma}, \lbleft{A}, \lbright{B}
    } 
    &
     \infer[(\coimp \lleft)]{\lbright{\Gamma},\lbleft{(A \coimp B)} }{\lbright{\Gamma},\lbleft{A}, \lbright{B}} &
  \infer[(\coimp \lright)]{ {\Gamma}, \lbright{(A\coimp B)}}{{\Gamma}, \lbright{A}
     & {\Gamma}, \lbleft{B}} &
     
     \\\\
 \infer[(T)]{{\Gamma}, \lbleft{(\Box A)}}{{\Gamma}, \lbleft{A}}
 &
 \infer[(5)]{{\Box \Gamma}, \lbright{(\Box A)} }{{\Box \Gamma}, \lbright{A}}
&
     \infer[(\to \lright)_\Box]{\lbleft{\Gamma}, \lbright{(A\to B)}, \lbright{(\Box \Delta)}}{\lbleft{\Gamma}, \lbleft{A},
     \lbright{B}, \lbright{(\Box \Delta)}} &

\end{array}
\]
\end{minipage}
}
}
\fbox{
\begin{minipage}{\textwidth}
\noindent
\[
\begin{array}{cccc}
\infer[(\bot\lleft)]{\Gamma,\bot\Sa\Delta}{}
&
\infer[(\bot\lright)]{\Gamma\Sa\bot,\Delta}{\Gamma\Sa\Delta}
&
\infer[(\land\lleft)]{\Gamma,{A\land B}\Sa\Delta}{\Gamma,{A},{B}\Sa\Delta}
&
\infer[(\land\lright)]{\Gamma\Sa{A\land B},\Delta}
    {
    \Gamma\Sa{A},\Delta
    &
    \Gamma\Sa{B},\Delta
    }
\\\\
\infer[(\lor\lleft)]{\Gamma,A\lor B\Sa\Delta}
    {
    \Gamma,A\Sa\Delta
    &
    \Gamma,A\Sa\Delta
    }
&
\infer[(\lor\lright)]{\Gamma\Sa A\lor B,\Delta}
    {\Gamma\Sa A,B,\Delta}
&
\infer[(\to \lleft)]{\Gamma, A \to B\Sa\Delta}
    {
    \Gamma\Sa A,\Delta
    &
    {\Gamma}, B\Sa\Delta
    }
&
  \infer[(\to \lright)]{{\Gamma}\Sa {A\to B},\Delta}
    {
    \Gamma, {A} \Sa {B},\Delta
    } 
\\\\


 \infer[(\to \lright)_M]{{\Gamma}\Sa {A\to B}}
    {
    \Gamma, A\Sa B
    } 
    &
     \infer[(\coimp \lleft)]{{A \coimp B}\Sa\Delta }{A\Sa {B},\Delta}
&
  \infer[(\coimp \lright)]{ {\Gamma}\Sa A\coimp B,\Delta}{\Gamma\Sa A,\Delta
     & {\Gamma}, B\Sa\Delta
     }
&
\infer[(\land_i,\lleft)_{i=1,2}]{\Gamma,A_1\land A_2\Sa\Delta}
    {
    \Gamma,A_i\Sa\Delta
    }
      
     \\\\
 \infer[(T)]{{\Gamma}, {\Box A}\Sa\Delta}{{\Gamma}, A,\Sa\Delta}
 &
 \infer[(5)]{{\Box \Gamma}\Sa {\Box A},\Box\Delta }{{\Box \Gamma}\Sa {A},\Box\Delta}
&
\infer[(\to \lright)_\Box]{{\Gamma}\Sa {A\to B}, \Box \Delta}
        {{\Gamma}, {A}\Sa
     {B}, \Box \Delta}
&
\infer[(\coimp \lleft)_\Box]{{A \coimp B}, {\Box \Gamma} \Sa\Delta }{A, {\Box \Gamma} \Sa {B},\Delta}
\end{array}
\]
\end{minipage}
}
\caption{Some simple rules 
}
\label{fig:examplerules}
\end{figure*}

\section{Sufficient Conditions}
\label{Sec:conditions}
Fix a standard calculus $\SC$. In order to generalise the case study in Section~\ref{sec-idea}, we introduce here abstract conditions for $\SC$ to satisfy cut-restriction, and hence the analytic cut-property. 

The first two conditions 
are very familiar to proof theorists.
Axiom expansion is also known as the \emph{identity theorem}~\cite{Pfe10}. The principal case reduction corresponds to Belnap's condition (C8) for cut-elimination in display calculus~\cite{Bel82}. 

\begin{definition}[axiom expansion]
\label{def:expansion}
$\SC$ satisfies \emph{axiom expansion} if $
\lbleft{A},\lbright{A}$ has a cut-free proof for every formula $A$.
\end{definition} 
\begin{definition}[principal case reductions]
\label{def:princred}
A standard calculus satisfies \emph{principal case reductions}
if whenever $(\lbleft{C^\circ}, \Lambda_1,\ldots,\Lambda_M)\in\Lambda(\circ\lleft)$ and $(\lbright{C^\circ}, \Sigma_1,\ldots,\Sigma_N)\in\Lambda(\circ\lright)$ for some connective $\circ$, there is a deduction of the empty sequent from the sequents $\Lambda_1,\ldots,\Lambda_M,\Sigma_1,\ldots,\Sigma_N$ 
(each is a multiset of auxiliary formulas)
using only structural rules. 
%

\end{definition}

\begin{example}
Let us check the principal case reductions for $\land$ (cf. Fig.~\ref{fig:examplerules}). We have $(\lbleft{(A\land B)},\{\lbleft{A},\lbleft{B}\})\in\Lambda(\land\lleft)$ and $(\lbright{(A\land B)},\{\lbright{A}\},\{\lbright{B}\})
\in\Lambda(\land\lright)$ and 
\[
    \AxiomC{$\lbright{B}$}
    \AxiomC{$\lbright{A}$}
    \AxiomC{$\lbleft{A},\lbleft{B}$}
    \RightLabel{$(mcut)$}
    \BinaryInfC{$\lbleft{B}$}
    \RightLabel{$(mcut)$}
    \BinaryInfC{$\emptyset$}
    \DisplayProof
\]
\end{example}

\begin{definition}[consistency]
\label{def:consistency}
    $\SC$ is \emph{consistent} if it does not prove the empty sequent.
\end{definition}

This property is needed in the proof of the main theorem (only) to replace atomic cuts with analytic atomic cuts (case $(A2)$). 
While consistency is sometimes targeted as a corollary of cut-elimination (dating to Gentzen's pursuit of a ``finitistic" consistency proof for arithmetic), there is a much simpler and direct way to obtain it: exhibit a model that is closed under the axioms and rules of the calculus, and falsifies at least one formula of the logic. Many modal logics can be shown consistent e.g. by observing that their axioms and rules hold in a Kripke model consisting of a single reflexive world.

It is well-known that all calculi in Ex.~\ref{ex:calculi} are consistent
, satisfy axiom expansion and admit principal case reductions. 

\begin{definition}[invertibility]
\label{def:invertibility}
    $\circ$ is \emph{left-invertible} if the following holds:     
    If $(\lbleft{C}^\circ,\Lambda_1,\ldots,\Lambda_M)\in\Lambda(\circ\lleft)$ then for every $\ell$-multiset $\Gamma$ and every proof $\beta$ of $\Gamma,\lbleft{C^\circ}$ there is a proof $\beta_m$ of $\Gamma,\Lambda_m$ ($m\leq M$) satisfying the following:
    \begin{itemize}
        \item If $\beta$ is cut-free, then so is $\beta_m$
        \item If $\beta$ is locally analytic, then $\beta_m$ is locally analytic apart from possibly some cuts on proper subformulas of $C$
    \end{itemize} \emph{Right-invertible} is defined analogously: replace $\lbleft$ with $\lbright$.
\end{definition}


The condition in Def.~\ref{def:invertibility} will be used in the main proof. 
The following sufficient condition for invertibility 
is simpler to check in practice. 
See the appendix for a proof.
 \begin{lemma}
 \label{lem:invertible}
 A connective $\circ$ satisfying the conditions below is left-invertible (conditions for right-invertible are analogous).
 \begin{enumerate}
     \item For every $C^\circ$ there is a unique $(\Lambda_1,\ldots,\Lambda_M)$ such that $(\lbleft{C^\circ},\Lambda_1,\ldots,\Lambda_M)\in\Lambda(\circ\lleft)$
     \item Whenever $\lbleft{C^\circ}\in\context{\RR}$ for some simple rule $\RR$ and $(\lbleft{C^\circ},\Lambda_1,\ldots,\Lambda_M)\in\Lambda(\circ\lleft)$  then also $\supp{\Lambda_m}\subseteq\context{\RR}$ for every $m\leq M$.
\end{enumerate}
 \end{lemma}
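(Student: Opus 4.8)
The plan is to prove left-invertibility by induction on the height of the given proof $\beta$ of $\Gamma,\lbleft{C^\circ}$, building $\beta_m$ by tracing the distinguished occurrence of $\lbleft{C^\circ}$ — together with all occurrences that are contracted into it — upwards through $\beta$, and replacing each such \emph{ancestor} by the auxiliary multiset $\Lambda_m$. Because $C^\circ$ has a principal connective it is never a variable, so the base case ($\beta$ an initial sequent $\lbleft{x},\lbright{x}$) is vacuous. At the inductive step I would case-split on the last rule $\RR'$ of $\beta$, according to whether a tracked occurrence is principal in $\RR'$ or merely appears in its context.

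\emph{Principal case.} If $\RR'$ introduces a tracked occurrence, then — since that occurrence carries the label $\lleft$ and has principal connective $\circ$, and $(\circ\lleft)$ is the unique left rule for $\circ$ — the rule $\RR'$ must be $(\circ\lleft)$. Here condition~(1) is decisive: uniqueness of the tuple forces the premises to be exactly $\Gamma',\Lambda_1$ through $\Gamma',\Lambda_M$ for the \emph{same} tuple named in the statement. Hence the $m$-th premise already proves the required $\Gamma',\Lambda_m$; applying the induction hypothesis to invert any remaining tracked (context) occurrences inside it yields $\beta_m$. No cut is created, and since $\Lambda_m$ consists of proper subformulas of $C^\circ$ it contains no further $\lbleft{C^\circ}$.

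\emph{Context case.} If no tracked occurrence is principal, then every tracked occurrence lies in the context of $\RR'$, so $\lbleft{C^\circ}\in\context{\RR'}$, and condition~(2) gives $\supp{\Lambda_m}\subseteq\context{\RR'}$. I would apply the induction hypothesis to each premise — tracing the ancestors into it, noting that auxiliary formulas are never ancestors and are therefore left untouched — and then re-apply $\RR'$ with the substituted context $\Gamma'\subst{\Lambda_m}{\lbleft{C^\circ}}$; condition~(2) is precisely what makes this context admissible. The structural rules are absorbed into this case: weakening reintroduces $\Lambda_m$ where a $C^\circ$ was weakened in, and multicut recurses into the premise carrying the ancestors and is re-applied with the same cut formula. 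Contraction is the subtle point, since the single tracked conclusion occurrence may have several premise ancestors; I would invert all of them, producing $\Lambda_m^{k}$, and recover the target with one contraction, using that $\Lambda_m^{k}$ contracts to $\Lambda_m$.

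The hard part will be twofold. First, the contraction rule is exactly why the induction must track a \emph{set} of ancestors rather than a lone occurrence; this is handled uniformly by the availability of all structural rules. Second — and this is the reason for the qualified analyticity clause — is the bookkeeping of analyticity. The transformation introduces no new cuts and alters no cut formula, so cut-freeness is preserved outright; but substituting ancestors of $\lbleft{C^\circ}$ by $\Lambda_m$ can destroy the subformula-witness of a cut in $\beta$ that was analytic only because its conclusion contained such an ancestor. The key observation is that the cut formula of any such affected cut is a subformula of $C$, and a short argument (isolating the degenerate case where the cut formula coincides with $C$ itself) confines the exceptions to \emph{proper} subformulas of $C$ — matching exactly the slack allowed in Definition~\ref{def:invertibility}. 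The right-invertible statement is entirely symmetric, interchanging the labels $\lleft$ and $\lright$ and appealing to $(\circ\lright)$ in the principal case.
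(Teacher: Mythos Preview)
Your overall architecture matches the paper's: induction on (a size measure of) $\beta$, with the case split into ``$\lbleft{C^\circ}$ principal in the last rule'' (use uniqueness), ``$\lbleft{C^\circ}$ in the context'' (use condition~(2) and the Substitution Lemma), and structural rules. But there are two places where your sketch does not close.

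\textbf{The degenerate multicut case.} You say that multicut ``is re-applied with the same cut formula'' and that ``a short argument (isolating the degenerate case where the cut formula coincides with $C$ itself) confines the exceptions to proper subformulas of $C$.'' This is the step that fails. Suppose the last rule of $\beta$ is a multicut on $C^\circ$ whose conclusion is $\Gamma,\lbleft{C^\circ}$, and suppose the tracked $\lbleft{C^\circ}$ is the \emph{only} witness of analyticity for that cut. If you invert the tracked occurrence in the relevant premise and re-apply the multicut, you obtain a multicut on $C$ with conclusion $\Gamma,\Lambda_m$; since $\Lambda_m$ consists of proper subformulas of $C$, this multicut is non-analytic and its cut formula is $C$ itself---not a \emph{proper} subformula. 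That violates the clause in Definition~\ref{def:invertibility}. There is no ``short argument'' that rules this out: you must treat this case by a different transformation. The paper's fix is to observe that such a multicut is redundant ($C$ appears in its conclusion), so one can drop it entirely: apply the induction hypothesis $(q{+}1)$ times to the premise carrying $\{\lbleft{C^\circ}\}^{q+1}$ to obtain $(\Lambda_m)^{q+1}$, then contract and weaken in the other premise's context. No cut on $C$ survives.

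\textbf{The induction hypothesis for contraction.} You state the induction for a proof of $\Gamma,\lbleft{C^\circ}$ (a single occurrence), but at the contraction step you need to ``invert all of them'' in a premise with $k$ tracked occurrences. One application of the IH as you stated it yields a proof whose height you do not control, so you cannot iterate. Either strengthen the IH to invert an arbitrary number of tracked occurrences simultaneously (this is what your ancestor-tracing narrative implicitly assumes), or---as the paper does---prove the stronger \emph{size-preserving} variant so that repeated applications of the IH to the same subproof remain well-founded. The paper needs exactly this for the degenerate multicut case above as well: inverting $q{+}1$ copies requires $q{+}1$ appeals to the IH on proofs bounded in size by the original premise.
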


Note that the uniqueness assumption in Lemma~\ref{lem:invertible} is satisfied for $\land$ with the rule $(\land\lleft)$ but not if we use their non-invertible variant $(\land_i,\lleft)_{i=1,2}$ (cf. Fig.~\ref{fig:examplerules}).
In our framework they would amount to a rule $\RR$ with $\Lambda(\RR)=\{(\lbleft{(A_1\land A_2)},\{\lbleft{A_i}\})\mid A_1,A_2\in\For,i=1,2\}$.

If all connectives are left- and right-invertible and $\SC$ satisfies principal case reductions, then cuts on arbitrary formulas can be reduced to cuts on variables. However, this situation rarely occurs.
The crux of this paper are the weaker \textit{substitution properties}, motivated and defined below,
%
%
which guarantee that non-principal occurrences of cut formulas in a proof can be replaced by certain other formulas. First observe that the permutation of a cut above a rule corresponds to a substitution:

\begin{center}\AxiomC{$\Gamma,\lbright{C^\circ}$}
\AxiomC{$\lbleft{C^\circ},\Lambda_1$}
\RightLabel{$\RR$}
\UnaryInfC{$\lbleft{C^\circ},\Lambda$}
\BinaryInfC{$\Gamma,\Lambda$}
\DisplayProof\quad
\AxiomC{$\Gamma,\lbright{C^\circ}$}
\AxiomC{$\lbleft{C^\circ},\Lambda_1$}
\BinaryInfC{$\Gamma,\Lambda_1$}
\RightLabel{?$\RR$?}
\UnaryInfC{$\Gamma,\Lambda$}
\DisplayProof
\end{center}
In the figure on the right the cut on $\lbleft{C}^\circ$ has been lifted above the instance of $\RR$, yielding a new instance where $\lbleft{C}^\circ$ has been substituted with $\Gamma$. In general, such substitutions can fail to be legal if $\RR$ has a context restriction: we might have $\lbleft{C}^\circ\in\context{\RR}$ but $B\notin \context{\RR}$ for some $B$ in $\Gamma$. The (weakly) leftable/rightable properties in Def.~\ref{def:substitution} assert that `nothing goes wrong' when such substitutions are carried out.
These properties are reformulations of known sufficient conditions for cut-elimination.
In more detail, the leftable property will ensure that any cut on $\lbleft{C}^\circ$ can be lifted above $\RR$. The weakly leftable property ensures that such a lifting is possible when $\lbright{C}^\circ$ is principal in the left premise of cut, and consequently $\Gamma$ contains only formulas from $\context{\circ\lbright}$. To make this substitution legal, we therefore require that $\context{\circ\lbright}\subseteq\context{\RR}$.

However, in some cases the context restrictions are such that a cut cannot be lifted to the point that it is principal in both premises (irrespective of the strategy of lifting). This is the crucial case that is a main concern of this paper. To handle this situation, we introduce a novel proof transformation that replaces the cut with analytic cuts. This transformation relies on the legality of new substitution properties that we call \emph{inverse leftable} and \emph{inverse rightable}. 
The latter property is that if $\lbright{C^\circ}\in\context{\RR}$, then any formula in $\context{\circ\lbright}$ `swapped' to the other side is  in the context of $\RR$. 
In other words, the substitution of the formula $\lbright{C^\circ}$ with a formula from $\context{\circ\lbright}$ whose side (label) is swapped leads to a new instance of $\RR$.
To state this property precisely we define inversion of labels, $\ell$-formulas and sets thereof as follows: 
$\bar{\lleft}=\lright$, $\bar{\lright}=\lleft$, $\overline{\lbx{A}}=\bar{\lbx{}} A$ and $\overline{\mathcal{C}}=\{\overline{\lbx{A}} | \lbx{A}\in\mathcal{C}\}$.
 The inverse rightable property is used to construct the leftmost sub-derivation in Fig~\ref{fig-guided} (observe the substitution $[G'/A \to B]$) of an analytic cut. In particular, by swapping the sides of formulas that we had in the original proof and ultimately cutting on them, we are able to replace the arbitrary cut with analytic cuts. 

\begin{definition}[substitution properties] Let $\circ$ be a connective of a standard calculus $\SC$.
It has the stated property when for every formula $C^\circ$ and every simple rule~$\RR$:
\label{def:substitution}
\begin{enumerate}
    \item (\emph{leftable})  $\lbleft{C^\circ}\in\context{\RR}$ implies that $\RR$ has no context restriction.
    (\emph{rightable})  $\lbright{C^\circ}\in\context{\RR}$ implies that $\RR$ has no context restriction.
    
    \item (\emph{weakly leftable}) $\lbleft{C^\circ}\in\context{\RR}$ implies $\context{\circ\lright}\subseteq\context{\RR}$. 
    
    (\emph{weakly rightable}) $\lbright{C^\circ}\in\context{\RR}$ implies $\context{\circ\lleft}\subseteq\context{\RR}$.
    
    \item (\emph{inverse leftable}) $\lbleft{C^\circ}\in\context{\RR}$ implies $\overline{\context{\circ\lleft}}\subseteq\context{\RR}$.
    
    (\emph{inverse rightable}) $\lbright{C^\circ}\in\context{\RR}$ implies $\overline{\context{\circ\lright}}\subseteq\context{\RR}$.
     
\end{enumerate}
\end{definition}

\begin{fact}
If a connective is leftable (rightable), then it is also weakly and inverse leftable (rightable).  
\end{fact}


We show our conditions at work in various standard calculi.

\begin{example}[Maehara's calculus]
   \label{Maehara}
    In Maehara's calculus, all connectives $\circ$ are rightable: If $\lbright{C^\circ}\in\context{\RR}$ then $\RR\neq(\rightarrow \lright)$, as $(\rightarrow \lright)$ does not permit context formulas labelled $\lright$, and all other rules $\RR$ have no context restriction.
    However $\to$ is {\em neither} leftable, as we cannot replace the context 
    formula $\lbleft{(A\rightarrow B})$ in an instance of $(\to \lright)$ with any $\lbright{\Sigma}$,
    {\em nor} inverse leftable, for the same reason. $\rightarrow$ is instead weakly leftable, as $\context{\to\lright}$ contains only formulas labelled $
    \lleft$ and these are allowed in the context of any other rule.

    \end{example}

    \begin{example}[$\BiInt$]
    \label{BI-Int}
As in Maehara's calculus, $\to$ is weakly leftable 
as the additional connective $\coimp$ does not pose problems: $(\coimp \lleft)$ has no left context so the condition $\lbleft{(A\rightarrow B}) \in\context{\coimp \lleft}$ implies $\context{\to \lright}\subseteq\context{\coimp \lleft}$ trivially holds, and
$(\coimp \lright)$ has no context restriction. However, $\to$ is not rightable unlike in Maehara's calculus as    
    $\lbright{(A\rightarrow B})$ might appear in the context of a rule instance $(\coimp \lleft)$ but would not admit a substitution with any $\lbleft{\Gamma}$. 
    Nevertheless, $\to$ is inverse rightable, as
    formulas in $\context{\to \lright}$ are left formulas, and the replacement of a
    $\lbright{(A\rightarrow B})$ in the context with 
    right formulas (note the switch
    from left to right) works for every rule. In particular, for $(\to \lright)$ the condition $\lbright{(A \to B)}\in\context{\to \lright}$ implies $\overline{\context{\to \lright}}\subseteq\context{\to \lright}$ holds trivially since $\context{\to \lright}$ contains no right formulas. 
    %
    $\coimp$ behaves symmetrically: $\coimp$ is weakly rightable and 
    inverse rightable.
    \end{example}

    \begin{example}[$\SF$ modalities]
    \label{ex:modalities}
    We discuss below diverse calculi containing the $(T)$ and $(5)$ rules in Fig~\ref{fig:examplerules}.
    Later, each will be seen to have cut-restriction under our general conditions.
\begin{enumerate}
    \item 
    In $\SF$ all connectives $\circ\neq\Box$ are both leftable and rightable. 
    In contrast, $\Box$ is neither leftable nor rightable 
    since $C^\Box$ \emph{can} appear in the left or right context of the $(5)$ rule and this rule has context restrictions (that only boxed formulas are allowed). $\Box$ is 
    not weakly rightable because
    $\lbright{C^\Box}\in\context{5}$ implies $\context{T}\subseteq\context{5}$ does not hold since $\context{T}$ contains non-boxed formulas but $\context{5}$ does not. As boxed formulas
    are permitted in the left and right context of every rule, it follows that $\Box$ is weakly leftable 
    and inverse rightable (note that $\overline{\context{5}}=\context{5}$, so the inversion does not matter here).

    \item Introduced in~\cite{Ono77}, the $\Lan_0\cup\{\Box\}$-calculus $\GF$ is obtained by replacing in Maheara's calculus
    (Ex.~\ref{Maehara}) the rule $(\to \lright)$ with $(\to \lright)_\Box$ from Fig.~\ref{fig:examplerules} 
and adding the $\SF$ rules $(T)$ and $(5)$. $\GF$ is sound and complete for L4, the strongest S5-type intuitionistic modal logic. 
Cut-elimination is known to fail for $\GF$, see~\cite{OnoSano}.
As in Maehara's calculus,  
all connectives $\circ\neq\Box$ are rightable, and moreover $\to$ is weakly leftable.
As in $\SF$, boxed formulas are permitted in the left and right context of any rule (here it becomes important that we use $(\to \lright)_\Box$) and so 
$\Box$  is weakly leftable and inverse rightable.

\item Let us consider the extension of {\bf G4} with ``well behaving" rules for $\coimp$, i.e.
$(\coimp \lright)$ and $(\coimp \lleft)_\Box$ from  Fig~\ref{fig:examplerules}. 
We call the resulting calculus $\BiInt^{\SF}$.
As in {\bf G4}, $\to$ is weakly leftable in  $\BiInt^{\SF}$ (the rules for $\coimp$ do not pose problems as $\lbleft{(A \to B)} \not\in \mathcal{C}(\coimp \lleft)_\Box$ and the rule $(\coimp \lright)$ has no context restriction) and inverse rightable (due to the presence of the context with $\Box$ formulas on the left in $(\coimp \lleft)_\Box$), 
while $\coimp$ is weakly rightable and inverse left-subtitutable. $\Box$ is weakly leftable and inverse rightable.

    \end{enumerate}
\end{example}

The following example exhibit a calculus whose $\Box$ modality does not satisfy any of our substitution properties.
\begin{example}
Introduced in~\cite{Ono77}, the calculus {\bf G3} is obtained by adding to Maheara's calculus the $\SF$ rules $(T)$ and $(5)$ from Fig.~\ref{fig:examplerules}. 
{\bf G3} does not admit cut-elimination.
In {\bf G3}, $\Box$ is neither weakly leftable (as a context formula $\lbleft{C^\Box}$ in $(\to\lright)$ cannot be replaced by any context of the $(5)$ rule, which might also contain boxed formulas labelled right), nor inverse leftable (for the same reason).
  \end{example}

\section{Cut-restriction}
\label{sec:cut-restriction}

We state our main theorem that presents cut-restriction yielding analytic cuts.
This is Theorem~\ref{thm:mainclasstwo}. The crucial Reduction Lemma is proved in the next section. Adapting Takano's terminology, we consider two classes of connectives that allow
for reducing arbitrary cuts to smaller cuts (class 1), and to analytic cuts (class 2).
\begin{definition}
A connective $\circ$ in a standard calculus is \emph{class~1} if it is one of the following:
\begin{enumerate}
    \item left-invertible and right-invertible
    \item leftable and weakly rightable
    \item rightable and weakly leftable
\end{enumerate}
It is \emph{class~2} if it is class 1 or one of the following:

\begin{enumerate}\setcounter{enumi}{3} 
    \item weakly leftable and inverse rightable
    \item weakly rightable and inverse leftable
\end{enumerate}
\end{definition}



\begin{definition}
[class 2 calculus]\label{def:classtwo}
    A standard calculus is \emph{class~2} if it is consistent, satisfies principal case reductions and axiom expansion, and every connective in it is class~2.
\end{definition}

Verifying membership is not modular with respect to language extensions for class~1 nor class~2 (see Remark~\ref{rem:class12}).


\begin{theorem}[Main Theorem]
\label{thm:mainclasstwo}
    Every class~2 standard calculus has the analytic cut property.
\end{theorem}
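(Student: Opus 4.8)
The plan is to isolate the real content in a \emph{Reduction Lemma} about a single cut and then iterate it under a well-founded induction. To a proof $\pi$ I attach the finite multiset $M(\pi)$ recording the complexity of each \emph{non-analytic} cut formula occurring in $\pi$, ordered by the multiset ordering (well-founded, as formula complexity is). The induction step selects an \emph{uppermost} non-analytic cut of $\pi$---one with no non-analytic cut above it---so that its two immediate subproofs $\gamma,\delta$ are locally analytic. The Reduction Lemma will state that the subproof ending in this cut can be replaced, without changing its endsequent, by a deduction whose every non-analytic cut is on a \emph{proper} subformula of the cut formula $C$. Splicing the replacement back leaves the part of $\pi$ below it untouched while deleting the cut on $C$ (and the analytic cuts that sat above it) and inserting only non-analytic cuts of strictly smaller complexity; hence $M(\pi)$ strictly decreases, and iterating drives it to $\emptyset$, yielding a locally analytic proof, which is the analytic cut property.

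I would prove the Reduction Lemma by cases on the class of the principal connective $\circ$ of $C^\circ$, reusing the principal case reductions of Definition~\ref{def:princred} throughout. The \textbf{class~1} cases recover Gentzen's argument. When $\circ$ is invertible on both sides (type~1) I invert $\gamma$ and $\delta$ via Definition~\ref{def:invertibility}---which preserves local analyticity up to cuts on proper subformulas of $C$---and recombine the results according to a principal case reduction, turning the cut on $C$ into cuts on its auxiliary formulas. When $\circ$ is rightable and weakly leftable (type~3) the cut can be permuted upward until $C$ is principal in both premises: rightability and weak leftability (Definition~\ref{def:substitution}) are exactly the conditions guaranteeing that each required permutation stays a legal rule instance, and a principal case reduction then finishes. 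Type~2 is symmetric.

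The crux, and the step I expect to be the main obstacle, is \textbf{class~2}, say $\circ$ weakly leftable and inverse rightable (type~4; type~5 is symmetric). Here the context restrictions can block the cut from ever becoming principal on both sides, so permutation is unavailable and I instead run the transformation of Fig.~\ref{fig-guided} in full generality. I trace all ancestors of $C$ up to their critical inferences $(\circ\lright)$ and cut instead on the contexts $\Gamma'$ of those inferences, using \emph{inverse rightability} to rewrite the traced occurrences inside $\botgamma$ (where $C$ sits in a right context and so swaps sides under the substitution $\subst{\Gamma'}{C}$) and \emph{weak leftability} to rewrite them inside $\botdelta$; the leaf $\Sigma',\Gamma'\Sa\Pi'$ is discharged by the principal case reductions on the traced tops and the leaf $\Gamma'\Sa\Gamma'$ by axiom expansion. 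Contraction forces several occurrences of $C$ to be traced together, which is precisely why the calculus is built on $(mcut)$ rather than a binary cut, and a premise may carry several critical inferences with different contexts, so cuts are introduced on \emph{all} of their context formulas at once.

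The genuinely delicate point is showing that each newly introduced cut is either analytic or has cut formula a \emph{proper} subformula of $C$, for only then does $M$ decrease. Since $\gamma,\delta$ are locally analytic, the subformula property confines every formula in them---in particular every formula of each $\Gamma'$---to the subformulas of the respective premise endsequent; such a formula is therefore either a subformula of the main cut's conclusion, making the new cut analytic, or a subformula of $C$. Excluding the degenerate possibility that it equals $C$ (which would reinstate a cut of the same complexity) is what the \emph{preprocessing} hinted at in Section~\ref{sec-idea} must guarantee, and pinning this down---together with checking that the substitutions $\botgamma\subst{\Gamma'}{C}$ and $\botdelta\subst{\Gamma'}{C}$ keep the internal cuts locally analytic---is the technical heart that the next section has to supply.
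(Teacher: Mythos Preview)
Your plan is essentially the paper's: a Reduction Lemma for a single uppermost non-analytic cut, a case split on the class of~$\circ$, and for types~4/5 the global substitution-and-cut argument of Fig.~\ref{fig-guided}. The preprocessing you leave open is precisely the removal of \emph{redundant} multicuts on~$C$ (those whose conclusion still contains~$C$); once these are gone, any $\Gamma'$-formula that traces down to~$C$ must pass through a locally analytic cut on a proper subformula of~$C$, giving the strictness you need.

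There is, however, a real gap in the induction. Your measure $M(\pi)$ records only degrees, and your Reduction Lemma promises that every residual non-analytic cut is on a \emph{proper} subformula of~$C$. The transformations do not deliver this in one shot. In types~2/3, and also in constructing the principal-reduction tops for types~4/5, the multicut carries multiplicities $p,q$: only one occurrence of~$C$ on each side is principal in the last inference, the remaining $p{-}1$ and $q{-}1$ sit in context. The principal case reduction dissolves the principal pair into cuts on auxiliary formulas, but the leftover copies are eliminated by a fresh multicut on~$C$ itself (cutting, e.g., $top_m(\gamma)$ against all of~$\delta$). That cut has the \emph{same} degree~$d$ and strictly smaller \emph{rank}; your $M$ is blind to rank, so neither your Reduction Lemma as stated nor the global termination argument goes through. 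The paper handles this by taking the lexicographic pair (degree, rank) as the measure and weakening the Reduction Lemma accordingly: its output is only ``$(d,r)$-reduced'', meaning non-analytic cuts have degree $<d$, or degree~$d$ with rank $<r$ and no non-analytic cut below them. You could instead keep your stronger statement and prove it by an inner induction on rank, but rank must appear somewhere.

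A smaller omission: your case analysis presupposes $C=C^\circ$ is compound. When $C$ is a variable the paper substitutes it uniformly by an arbitrary subformula of the conclusion (one exists by consistency), repairs the broken initial sequents via axiom expansion, and the cut becomes analytic outright.
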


The proof 
uses Gentzen-style proof transformations that replace non-analytic cuts with either analytic or ``smaller'' cuts.

\begin{definition}[inductive measures]\mbox{}
The \emph{degree} of a multicut in a proof is the number of symbols in its cut formula, and its \emph{rank} is the number of sequents above it.

A proof is \emph{$(d,r)$-reduced} if every non-analytic multicut in it has  degree~$\leq d$, and those non-analytic multicuts of maximal degree $d$ have rank~$<r$ and are not below any other non-analytic multicut.
%
\end{definition}

Below is a schematic representation of a $(d,r)$-reduced proof with the restrictions on its non-analytic multicuts.

\begin{center}
\tikzset{every picture/.style={line width=0.75pt}} 

\begin{tikzpicture}[x=0.75pt,y=0.75pt,yscale=-.57,xscale=.65]

\draw   (239.84,209.51) -- (134.42,20.63) -- (412.95,20.48) -- cycle ;
\draw    (158,64) -- (374,64) ;
\draw    (172,90) -- (351,89.5) ;

\draw (176,67) node [anchor=north west][inner sep=0.75pt]   [align=left] {\small degree $\displaystyle d\ $and rank $\displaystyle < r$};
\draw (209,120) node [anchor=north west][inner sep=0.75pt]   [align=left] {\small degree $\displaystyle < d$};
\draw (190,30) node [anchor=north west][inner sep=0.75pt]   [align=left] {\small no non-analytic cuts};

\end{tikzpicture}

\end{center}


\noindent The following serves as main lemma in the proof of Thm.~\ref{thm:mainclasstwo}.

\begin{lemma}[Reduction Lemma]
\label{thm:classtwo}
Let $\beta$ be a proof in a class~2 calculus which is locally analytic apart from a single non-analytic multicut of degree $d$ and rank $r$ as its last inference. Then there is a $(d,r)$-reduced proof $\beta'$ of the same endsequent.
\end{lemma}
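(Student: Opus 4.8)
The plan is to prove the Reduction Lemma by a single Gentzen-style transformation applied to the one non-analytic multicut at the root of $\beta$, chosen according to the class of the cut formula $C$, and to verify in each case that the result $\beta'$ is $(d,r)$-reduced. The key structural fact I would exploit throughout is that, since $\beta$ is locally analytic above the root cut, both its premises are locally analytic proofs and therefore enjoy the subformula property: every formula in the left premise is a subformula of $\Gamma\cup\{C\}$, and every formula in the right premise is a subformula of $\Delta\cup\{C\}$. I first dispose of the base case in which $C$ is a variable $x$: if the cut is already analytic there is nothing to do, and otherwise $x$ occurs in neither $\Gamma$ nor $\Delta$, so by consistency (Def.~\ref{def:consistency}) the endsequent is non-empty; picking any formula $D$ in it and substituting $x\mapsto D$ uniformly (legal since rules are closed under uniform substitution) leaves the endsequent fixed, turns the cut analytic, and preserves analyticity of all other cuts. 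This is case $(A2)$.

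Next I treat the \emph{class~1} connectives. If the principal connective $\circ$ of $C$ is left- and right-invertible, I apply invertibility (Def.~\ref{def:invertibility}) to peel $C^\circ$ off both premises and recombine the resulting proofs of the auxiliary sequents using the principal case reduction (Def.~\ref{def:princred}), glued by multicuts on the auxiliary (hence \emph{proper} sub-)formulas; every introduced non-analytic cut then has degree $<d$, so $\beta'$ is $(d,r)$-reduced. If instead $\circ$ is rightable and weakly leftable (the mirror case, leftable and weakly rightable, is dual), I perform one Gentzen reduction step: when $C$ is non-principal in the left premise it occurs there as a succedent context formula, so rightability makes the governing rule unrestricted and the cut permutes upward, strictly decreasing rank; once $C$ is principal in the left premise via $(\circ\lright)$, the descending context lies in $\context{\circ\lright}$, and weak leftability yields $\context{\circ\lright}\subseteq\context{\RR}$ for any rule $\RR$ carrying $\lbleft{C}$, permitting the permutation into the right premise; and when $C$ is principal in both premises the principal case reduction replaces the cut with cuts of degree $<d$. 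A permutation leaves a (possibly non-analytic) degree-$d$ cut but with rank $<r$ that lies below no other non-analytic cut, while a principal reduction leaves only degree-$<d$ cuts; either outcome is $(d,r)$-reduced.

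The heart of the argument is the \emph{class~2} case in which $\circ$ is weakly leftable and inverse rightable (type~5 being symmetric). Here no permutation brings the cut to a principal--principal configuration, so I carry out the transformation of Section~\ref{sec-idea} (Fig.~\ref{fig-guided}) in one shot. I trace the ancestors of the cut formula up to their \emph{critical inferences} — the instances introducing $C^\circ$ as principal or weakening it in — thereby splitting each premise into a top part and a bottom part $bot(\gamma)$, $bot(\delta)$, and I replace the root cut on $C^\circ$ by cuts on the \emph{contexts} $\Gamma'$ of the critical inferences occurring in the left premise. The two substitutions are exactly what the substitution properties (Def.~\ref{def:substitution}) license: in $bot(\delta)$, where $C^\circ$ occurs as an antecedent context formula, weak leftability guarantees that replacing it by $\Gamma'$ produces legal rule instances; in $bot(\gamma)$, where $C^\circ$ occurs as a succedent context formula, inverse rightability guarantees that replacing it by $\Gamma'$ \emph{with its side swapped} stays legal. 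The top parts, combined via the principal case reduction, prove the recombined leaves (such as $\Sigma',\Gamma'\Sa\Pi'$ in the figure), while the substituted $bot(\gamma)$ terminates in the trivial leaf $\Gamma'\Sa\Gamma'$.

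The main obstacle — and the place where the subformula observation pays off — is bounding the new cuts introduced on the critical-inference contexts $\Gamma'$. Each formula of $\Gamma'$ is a subformula of $\Gamma\cup\{C\}$: those descending from $\Gamma$, hence from the endsequent, yield \emph{analytic} cuts, and those that are \emph{proper} subformulas of $C$ yield cuts of degree $<d$; the residual danger that $\Gamma'$ itself contains $C^\circ$ (which would leave a degree-$d$ cut) is precisely what the preprocessing alluded to in Section~\ref{sec-idea} is designed to exclude, and I expect this to be the most delicate point to pin down rigorously. Two bookkeeping complications must also be absorbed: contraction forces us to trace several occurrences of $C^\circ$ simultaneously, which is handled uniformly by phrasing everything in terms of the $(mcut)$ rule rather than a binary cut; and a premise may contain several critical inferences with distinct contexts, handled by cutting on \emph{all} of their formulas at once. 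With these in place, every non-analytic cut of $\beta'$ has degree $\leq d$, and any of degree exactly $d$ has rank $<r$ and lies below no other non-analytic cut, so $\beta'$ is $(d,r)$-reduced, as required.
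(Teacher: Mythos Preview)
Your plan tracks the paper's proof closely and the case split is right, but in the class~2 (weakly leftable + inverse rightable) case there are two genuine gaps that you flag without resolving.

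The ``preprocessing'' you invoke is not the selection of an uppermost cut; it is the removal of \emph{redundant} multicuts on $C$ inside $\gamma$ and $\delta$---those whose conclusion already contains $C$---each of which can be replaced by contraction and weakening. Your subformula-property observation gives only that each $A$ in some $\Gamma_i$ is a subformula of $\Gamma\cup\{C\}$, which does not exclude $A=C$. The paper's argument (its Claim~\ref{claim:red}) is sharper: trace $A\in\Gamma_i$ \emph{downward} through $\gamma$. Since $A$ sits in the \emph{context} $\Gamma_i$ of the critical inference, it is not an ancestor of the root multicut; hence if the trace terminates in $\{\lbright C\}^p$ it must have passed through some inner multicut with cut-formula $A'$, and by non-redundancy $A'\neq C$, so $A'$ (and therefore $A$) is a \emph{proper} subformula of $C$. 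The same non-redundancy step is needed a second time to show that analytic cuts inside $bot(\gamma),bot(\delta)$ that lose analyticity under the substitution are on proper subformulas of $C$ (the paper's claim immediately following Claim~\ref{claim:red}); you do not mention this, and a similar ``special case'' check is needed in your class~1 shift reductions when $\RR(\delta)$ is itself a multicut on $C$.

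Your one-line treatment of multiple critical inferences (``cutting on all of their formulas at once'') hides a real combinatorial step. The paper collects $\mathcal{F}$, the set of all formulas appearing in any $\Gamma_i$, and introduces cuts on every element of $\mathcal{F}$ below $\Gamma,\Delta$, yielding $2^{|\mathcal{F}|}$ premises indexed by \emph{distributions} (labellings) $\mathcal{D}$ of $\mathcal{F}$. The key dichotomy is that every $\mathcal{D}$ is either \emph{$i$-matching} ($\supp{\Gamma_i}\subseteq\mathcal{D}$ for some $i$) or \emph{orthogonal} (for every $i$ it contains some $\bar A$ with $A\in\Gamma_i$). In the first case the premise $\Delta,\mathcal{D}$ is obtained from $bot(\delta)[\Gamma_i/C]$ plus weakening; in the second, $\Gamma,\mathcal{D}$ is obtained from $bot(\gamma)[\mathcal{E}_{\mathcal{D}}/C]$ with its leaves closed by axiom expansion (this is where $\Gamma_i\neq\emptyset$ matters; the paper handles $\Gamma_{i_0}=\emptyset$ separately and trivially). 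Without this matching/orthogonal split you cannot uniformly close all $2^{|\mathcal{F}|}$ branches, since a single substitution $[\Gamma_{i}/C]$ in $bot(\delta)$ only supplies one family of premises.
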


We devote the next section to a proof of the Reduction Lemma. First observe that Lemma~\ref{thm:classtwo} implies Thm.~\ref{thm:mainclasstwo}.%
\begin{proof}[Proof of Thm.~\ref{thm:mainclasstwo}]

It suffices to show that an arbitrary subproof $\beta$ ending in an uppermost non-analytic cut can be replaced by a locally analytic proof of the same endsequent. 
The proof proceeds by induction on the pair (degree,rank) of the uppermost non-analytic cut under the usual  lexicographic ordering.
%
Suppose that this non-analytic cut has (degree,rank) $(d,r)$ and suppose that the claim holds for every smaller pair. 
Apply the Reduction Lemma to replace $\beta$ with a $(d,r)$-reduced proof $\beta'$ (every uppermost non-analytic cut in $\beta'$ has degree $\leq d-1$, or degree~$d$ and rank~$<r$). Moreover $\beta'$ cannot contain a non-analytic cut of degree $d$ below another non-analytic cut of degree $d$ since the lower cut would violate the condition ``not below any other non-analytic multicut'' in the definition of $(d,r)$-reduced. Thus every non-analytic cut of degree $d$ in $\beta'$ is uppermost and can be eliminated by the IH to obtain $\beta''$ where each non-analytic cut has degree $\leq d-1$. Repeatedly applying the IH to uppermost non-analytic cuts---the degree of the lower cuts are unchanged after each application---we ultimately obtain a locally analytic proof. 
\end{proof}


%

\section{Proof of the Reduction Lemma}
\label{sec:mainproof}

Picture the lowermost multicut in $\beta$ like this:
\begin{align}
\label{eq:lowermost}
\infer[(mcut)]{\Gamma,\Delta}
    {
    \deduce{\Gamma,%
    \tms{\lbright{C}}{p}
    }{\mydots{\gamma}}
    &
    \deduce{\Delta,\tms{\lbleft{C}}{q}}{\mydots{\delta}}
    }
\end{align}
Recall that the superscripts $p$ and $q$ indicate multiplicities. By assumption, $\gamma$ and $\delta$ are locally analytic. Let $\RR(\gamma)$ and $\RR(\delta)$ denote the last rule in $\gamma$ and $\delta$ respectively.

We present a collection of  reductions that transform $\beta$ into a $(d,r)$-reduced proof $\beta'$. The first group of reductions, called \emph{Gentzen-style reductions} (A1), are well-known ingredients of many 
cut-elimination proofs. Some extra care has to be taken even in these familiar reductions as, unlike in cut-elimination, we cannot assume that $\gamma$ and $\delta$ are cut-free.

The second group (A2) consists of reductions that are peculiar to \emph{cut-restriction}: They do not necessarily decrease the degree or rank of the multicut, but might replace it with new analytic multicuts.

The required reductions depend on $\RR(\gamma)$, $\RR(\delta)$ and the properties of the outermost connective $\circ$ of $C$ (if $C$ is not a variable). In a final step (B), we argue that some reduction is always applicable. This establishes the Reduction Lemma.

It will be important in these reductions to be able to ``trace back'' occurrences of the multicut formula $C$. This is achieved by defining an \emph{immediate ancestor relation} between formula occurrences in the premise of each rule $\RR$ and formula occurrences in its conclusion as follows  (cf.~``congruence" in~\cite{Bel82}):
\begin{itemize}
    \item If $\RR=(w)$ (resp.\ $\RR=(c)$), then as the immediate ancestor relation we can take any injective (resp.\ surjective) function that maps $\ell$-formulas in the premise to the same $\ell$-formula in the conclusion.
    \item If $\RR=(mcut)$, the immediate ancestor relation is the obvious map from $\ell$-formulas in $\Gamma$ and $\Delta$ in the premises to $\ell$-formulas in $\Gamma,\Delta$ in the conclusion. The cut formulas are not immediate ancestors of any $\ell$-formula.
    \item If $\RR$ is any simple rule, the immediate ancestor relation relates context $\ell$-formulas in the premise(s) to identical $\ell$-formulas in the conclusion context, and auxiliary $\ell$-formulas in the premise to the principal formula.
\end{itemize}
The \emph{ancestor relation} is then the reflexive transitive closure of the immediate ancestor relation. We will call an \emph{ancestor of multicut} any ancestor of the cut formula $C$ in $\beta$'s lowermost multicut. In other words, an ancestor of multicut is an ancestor of any $C$ occuring in $\{\lbright{C}\}^p$ of $\{\lbleft{C}\}^q$ in~(\ref{eq:lowermost}).

The ancestor relation features in the following
lemma.

\begin{lemma}[Substitution Lemma]
\label{lem:subst}
Let
\[
    \infer{\Delta,\{C\}^q}
    	{
    	\Delta_1,\tms{C}{q_1}
    	&
    	\ldots
    	&
    	\Delta_n,\tms{C}{q_n}
    	}
\]
be an instance of a rule $\RR\neq(id)$ in a standard calculus where $C$ is an $\ell$-formula, $\{{C}\}^{q_i}$ marks the immediate ancestors of $\{{C}\}^q$ in the $i$th premise, and if $\RR$ is a simple rule then its principal formula does not appear in $\{C\}^q$.

For any $\ell$-multiset $\Gamma$ (with $\supp{\Gamma}\subseteq\context{\RR}$, in case of simple rules)
%
the following is also an instance of $\RR$:
\[
    \infer{\Delta,\Gamma^q}
    	{
    	\Delta_1,\Gamma^{q_1}
    	&
    	\ldots
    	&
    	\Delta_n,\Gamma^{q_n}
    	}
\]
\end{lemma}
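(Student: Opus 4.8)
The plan is to prove the Substitution Lemma by a case analysis on the form of the rule $\RR$, mirroring the three bullet points that define the immediate ancestor relation. The statement asserts that if we take any rule instance in which certain occurrences of a fixed $\ell$-formula $C$ are marked (by the ancestor relation) and then uniformly replace each such marked $C$ by an arbitrary $\ell$-multiset $\Gamma$, the result is again a legal instance of the same rule $\RR$. Since the replacement is carried out respecting multiplicities---$q$ copies of $C$ in the conclusion correspond to $q_i$ copies in the $i$th premise, with each getting replaced by $\Gamma^q$ resp.\ $\Gamma^{q_i}$---the combinatorial bookkeeping is the content, and the hypothesis $\supp{\Gamma}\subseteq\context{\RR}$ (for simple rules) is exactly what is needed to keep the new context legal.

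First I would dispose of the structural cases. If $\RR=(w)$, the marked $C$'s lie among the weakened-in formulas or the passive ones; since weakening adds an arbitrary $\Delta$, replacing each $C$ by the multiset $\Gamma$ (i.e.\ adding $\Gamma^q$ many formulas) yields another weakening instance, as the immediate ancestor relation for $(w)$ was chosen to be an injection and the non-image formulas are precisely those freely introduced. If $\RR=(c)$, the relation is a surjection from premise to conclusion, so each marked conclusion-$C$ pulls back to one or more premise-$C$'s; replacing a conclusion occurrence by $\Gamma$ and its preimages by the same $\Gamma$ (thereby passing from $\Gamma^q$ in the conclusion to $\Gamma^{q_1}$ in the single premise, with $q_1\geq q$ since $\Gamma^q$ is a contraction of $\Gamma^{q_1}$) is again a contraction because $\supp{\Gamma^q}=\supp{\Gamma^{q_1}}$. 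If $\RR=(mcut)$, the cut formulas are not ancestors of anything, so all marked $C$'s sit in the passive contexts $\Gamma,\Delta$ and the substitution simply commutes with the multicut.

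The main case is when $\RR$ is a simple rule. Here the hypothesis guarantees that the principal formula is \emph{not} among the marked occurrences, so every marked $C$ is a \emph{context} $\ell$-formula, and by the immediate ancestor relation for simple rules each such context $C$ in the conclusion corresponds to the identical context $C$ appearing in every premise (the auxiliary formulas and the principal formula are untouched). Writing the context of the instance as $\Gamma_0\cup\{C\}^q$ with $\Gamma_0$ the unmarked part, the substitution produces the new context $\Gamma_0\cup\Gamma^q$. This is a legal context for $\RR$ precisely because $\supp{\Gamma}\subseteq\context{\RR}$ (and the unmarked formulas of $\Gamma_0$ were already in $\context{\RR}$), and the additive character of simple rules means the same new context appears in every premise and the conclusion. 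Hence the displayed figure is a genuine instance of $\RR$ with the same $\Lambda$-tuple.

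I expect the main obstacle to be purely notational rather than conceptual: one must be careful that the multiplicities $q_i$ attached to the premises are \emph{consistent} with a single choice of immediate ancestor relation, so that replacing by $\Gamma^{q_i}$ in premise $i$ and $\Gamma^q$ in the conclusion really does arise from one coherent substitution of each individual $C$-occurrence by $\Gamma$. For contraction this is exactly the observation that passing to a contraction is preserved under blowing up each formula to $\Gamma$; for the additive simple rules it is the observation that context formulas appear with identical multiplicity across all premises and the conclusion. Once these multiplicity alignments are spelled out, each case is a one-line verification against Definition~\ref{standard-rule}, and the lemma follows.
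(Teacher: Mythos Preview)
Your proposal is correct and follows essentially the same approach as the paper: a case split on whether $\RR$ is a simple rule, weakening, contraction, or multicut, noting in each case how the $q_i$ relate to $q$ (equal for simple rules, $q_1\leq q$ for weakening, $q_1\geq q$ for contraction, $q_1+q_2=q$ for multicut) and then reading off the result from the rule's definition. The paper's version is considerably terser---it states these multiplicity relations and concludes ``by inspection''---whereas you spell out the bookkeeping, but there is no substantive difference.
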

\begin{proof}
If $\RR$ is a simple rule then $\{{C}\}^q$ is part of the context by hypothesis so  $q_1=\ldots=q_n=q$. If $\RR$ is weakening (contraction), then $n=1$ and $q_1\leq q$ ($q_1\geq q$). If $\RR$ is multicut, then $n=2$ and $q_1+q_2=q$. In all cases, the statement follows directly from inspection of the respective rule.
\end{proof}

We describe the reductions and their preconditions for their applicability.
\textit{The starting point is the proof diagram in (\ref{eq:lowermost})}.

\subsection*{(A1) Gentzen-style reductions}




\trans{\princred}
{
$C=C^\circ$, an ancestor of multicut is principal both in $\RR(\gamma)$ and in $\RR(\delta)$.
}
So $\beta$ concludes as follows:
\newsavebox\pleft
\sbox\pleft{
$
\AxiomC{\mydots{top_m(\gamma)}}
\noLine
\UnaryInfC{$\Gamma,\{\lbright{C}\}^{p-1},\Lambda_m$}
\DisplayProof
$
}
\newsavebox\pright
\sbox\pright{%
$%
\AxiomC{\mydots{top_n(\delta)}}
\noLine
\UnaryInfC{$\Delta,\{\lbleft{C}\}^{q-1},\Sigma_n$}
\DisplayProof%
$
}
\begin{prooftree}
    \AxiomC{$\left[\usebox\pleft\right]_m$}
    \RightLabel{$(\circ\lright)$}
    \UnaryInfC{$\Gamma,\{\lbright{C}\}^{p-1},\lbright{C}$}
    \AxiomC{$\left[\usebox\pright\right]_n$}
    \RightLabel{$(\circ\lleft)$}
    \UnaryInfC{$\Delta,\{\lbleft{C}\}^{q-1},\lbleft{C}$}
    \RightLabel{$(mcut)$}
    \BinaryInfC{$\Gamma,\Delta$}
\end{prooftree}
Henceforth we denote by  $[\ldots]_s$ a family of deductions indexed by the variable $s$. In the case above, we have $m\leq M$, $n\leq N$,  $(\lbright{C^\circ},\Lambda_1,\ldots,\Lambda_M)\in\Lambda(\circ\lright)$ and $(\lbleft{C^\circ},\Sigma_1,\ldots,\Sigma_N)\in\Lambda(\circ\lleft)$. We split the construction of $\beta'$ into two steps.

(1) We construct a proof $top_m^*(\gamma)$ of $\Gamma,\Delta,\Lambda_m$ ($m\leq M$). If $p-1\neq 0$, $top_m^*(\gamma)$ is obtained as:

\begin{prooftree}               \AxiomC{\mydots{top_m(\gamma)}}
    \noLine
    \UnaryInfC{$\Gamma,\{\lbright{C}\}^{p-1},\Lambda_m$}
    \AxiomC{\mydots{\delta}}
    \noLine
    \UnaryInfC{$\Delta,\{\lbleft{C}\}^{q-1},\lbleft{C}$}
    \RightLabel{$(mcut)$}
    \BinaryInfC{$\Gamma,\Delta,\Lambda_m$}
\end{prooftree}
Note that while the degree of this multicut remains $d$, its rank is strictly smaller than $r$. Thus $top_m^*(\gamma)$ is $(d,r)$-reduced. If $p-1=0$, then $top_m^*(\gamma)$ is defined to be $top_m(\delta)$ followed by weakening with $\Delta$. By a symmetric construction we obtain a $(d,r)$-reduced proof $top_n^*(\delta)$ of $\Gamma,\Delta,\Sigma_n$.

(2) Now combine these proofs into $\beta'$ as follows (we are using principal case reductions Def~\ref{def:princred}):
 
\newsavebox\topk
\savebox\topk{
$
\AxiomC{\mydots{top_m^*(\gamma)}}
\noLine
\UnaryInfC{$\Gamma,\Delta,\Lambda_m$}
\DisplayProof
$
}

\newsavebox\topl
\savebox\topl{
$
\AxiomC{\mydots{top_n^*(\delta)}}
\noLine
\UnaryInfC{$\Gamma,\Delta,\Sigma_n$}
\DisplayProof
$
}
\begin{prooftree}
    \AxiomC{$\left[\usebox\topk\right]_m$}
    \AxiomC{$\left[\usebox\topl\right]_n$}
    \dashedLine
    \RightLabel{$(mcut)$'s }
    \BinaryInfC{$\Gamma,\Delta$}
\end{prooftree}

By definition of a simple rule, the cuts coming from the principal case reductions are on formulas in $\supp{\Lambda_m}$ and $\supp{\Sigma_n}$, which in turn are proper subformulas of $C$. Thus the cuts are of degree $<d$. It follows that $\beta'$ is $(d,r)$-reduced.

\trans{\inversion}{$C=C^\circ$, $\circ$ left and right invertible}
Pick
$(\lbright{C^\circ},\Lambda_1,\ldots,\Lambda_M)\in\Lambda(\circ\lright)$ and $(\lbleft{C^\circ},\Sigma_1,\ldots,\Sigma_N)\in\Lambda(\circ\lleft)$
(by the definition of simple rule, there are always such tuples), 
and extend $\gamma$ and $\delta$ with contractions and weakenings to locally analytic proofs of $\Gamma,\Delta,\lbright{C}^\circ$ and $\Gamma,\Delta,\lbleft{C}^\circ$. 


By invertibility we obtain proofs $top^*_m(\gamma)$ of $\Gamma,\Delta,\Lambda_m$ and $top^*_n(\delta)$ of $\Gamma,\Delta,\Sigma_n$ for every $m\leq M$ and $n\leq N$ whose only multicuts are on proper subformulas of $C$. In particular, they are $(d,r)$-reduced. Now construct the $(d,r)$-reduced $\beta'$ as in the second step of the \princred{} case.

\trans{\rightshift}{
No ancestor of multicut is principal in $\RR(\delta)$ and one of the following holds:
\begin{enumerate}[(a)]
\item\label{assleftable} $ C=C^\circ$, $\circ$ is leftable 
\item\label{assweaklyleftable}  $ C=C^\circ$, $\circ$ is weakly leftable,  and $\supp{\Gamma}\subseteq\cont{\circ_\lright}$
\end{enumerate}}
We first isolate a single \emph{special case}, namely that $\RR(\delta)$ is again a multicut on $C$. In this case the end of $\beta$ runs as follows ($s,t>0$, $q=q_1+q_2$ and $\Delta=\Delta_1,\Delta_2$):

\[
\infer[(mcut)]{\Gamma,\Delta_1,\Delta_2}
    {
    \deduce{\Gamma,\tms{\lbright{C}}{p}}{\mydots{\gamma}}
    &
    \infer[(mcut)]{\Delta_1,\Delta_2,\{\lbleft{C}\}^{q_1+q_2}}
    	{
    	\deduce{\Delta_1,\{\lbleft{C}\}^{s+q_1}}{\mydots{\delta_1}}
    	&
    	\deduce{\Delta_2,\{\lbright{C}\}^{t},\tms{\lbleft{C}}{q_2}}{\mydots{\delta_2}}
    	}
    }
\]
$\beta'$ is then obtained as follows: Combine $\gamma$ and $\delta_1$ via a multicut to a proof of $\Gamma,\Delta_1$ 
followed by weakening. This new multicut has degree $d$ but rank $<r$, and so $\beta'$ is $(d,r)$-reduced.

So let us now assume the \emph{standard case} that $\RR=\RR(\delta)$ is not a multicut on $C$.
Note also that $\RR\neq(id)$ as $C$ is not a variable. The end of $\beta$ runs as follows:
\begin{align}
    \label{eq:beforeshift}
\infer[(mcut)]{\Gamma,\Delta}
    {
    \deduce{\Gamma,\tms{\lbright{C}}{p}}{\mydots{\gamma}}
    &
    \infer[\RR]{\Delta,\{\lbleft{C}\}^q}
    	{
    	\deduce{\Delta_1,\tms{\lbleft{C}}{q_1}}{\mydots{\delta_1}}
    	&
    	\ldots
    	&
    	\deduce{\Delta_n,\tms{\lbleft{C}}{q_n}}{\mydots{\delta_n}}
    	}
    }
\end{align}
where in the premises of $\RR$ we have highlighted all ancestors of multicut. $\beta'$ then is the proof
\begin{align}
\label{eq:aftershift}
\infer[(c)]{\Gamma,\Delta}
{
\infer[\RR]{{\Gamma}^{q},
\Delta}
	{
	\deduce{\Delta_1,{\Gamma}^{q_1}}{\mydots{\delta_1'}}
    	&
    	\ldots
    	&
    	\deduce{\Delta_n,{\Gamma}^{q_n}}{\mydots{\delta_n'}}
	}
}
\end{align}
 and $\delta_i'$ is defined as follows. 
If $q_i\neq 0$ then $\delta_i'$ is
\[
\infer[(w)]{{\Gamma}^{q_i},\Delta_i}
{
\infer[(mcut)]{\Gamma,\Delta_i}
	{
	\deduce{\Gamma,\tms{\lbright{C}}{p}}{\mydots{\gamma}}
	&
	\deduce{\Delta_i,\tms{\lbleft{C}}{q_i}}{\mydots{\delta_i}}
	}
 }
\]
This multicut is of grade $d$ but rank $<r$. If $q_i=0$ then $\delta_i':=\delta_i$. In any case $\delta_i'$ is $(d,r)$-reduced.

Let us argue that in~(\ref{eq:aftershift}) we indeed have an instance of $\RR$. By the Substitution Lemma (Lemma~\ref{lem:subst}) this follows directly if $\RR$ is not a simple rule, so assume it is. We have to show that $\supp{\Gamma}\subseteq \context{\RR}$. Indeed, by inspecting $\beta$ we see that $\lbleft{C^\circ}\in\context{\RR}$. Thus by assumption \ref{assleftable}, we get $\supp{\Gamma}\subseteq \context{\RR}$. If we assume \ref{assweaklyleftable}, we can observe  $\supp{\Gamma}\subseteq\context{\circ\lright}\subseteq \context{\RR}$.

It remains to establish that $\beta'$ is $(d,r)$-reduced. Knowing that each $\delta_i'$ is $(d,r)$-reduced, the only critical case is in fact that $\RR$ in~(\ref{eq:aftershift}) is a non-analytic multicut on some formula $D$. By assumption $\RR$ in~(\ref{eq:beforeshift}) is analytic. Clearly $\RR$ can lose its analyticity via the subsitution of $\Gamma$ for $\lbleft{C}$ only if $D$ was a subformula of $C$. As we have dealt with \emph{special case} $C=D$ separately, $D$ must in fact be a proper subformula of $C$. But then the degree of $\RR$ is $<d$, and so $\beta'$ is $(d,r)$-reduced.

\trans{\leftshift}{No ancestor of multicut is principal in $\RR(\gamma)$ and one of the following holds:
\begin{enumerate}[(a)]
\item  $ C=C^\circ$, $\circ$ is rightable 
\item  $ C=C^\circ$, $\circ$ is weakly rightable,  and $\supp{\Gamma}\subseteq\cont{\circ_\lleft}$
\end{enumerate}
}
Symmetric to \rightshift.


\vspace{-0.127cm}
\subsection*{(A2) Reductions introducing analytic cuts}

\trans{\renaming}{$C$ is a variable.}
Choose some formula $A$ appearing as a subformula in $\Gamma$ or $\Delta$ (at least one such formula exists because of the consistency assumption). We obtain $\beta'$ as follows: First, we replace all occurrences of the variable $x=C$ in $\beta$ by $A$. Note that this makes the lowermost multicut analytic. As all rules of a standard calculus apart from the initial sequents are closed under uniform substitution, we have obtained a deduction. Now some initial sequent $\lbleft{x},\lbright{x}$ in $\beta$ may have become a non-initial leaf $\lbleft{A},\lbright{A}$. In this case we equip such leaves with a cut-free proof using the axiom expansion property.

Note that pre-existing analytic multicuts in $\beta$ do not become non-analytic through uniform substitution. We conclude that $\beta'$ is locally analytic, and therefore in particular $(d,r)$-reduced.

\trans{\leftanalytic}{$C=C^\circ$, $\circ$ is weakly leftable and inverse rightable}
\label{analyticcutting}%
%
\newsavebox\innerleftproof
\savebox\innerleftproof{
$
\AxiomC{\mydots{ top_{\mv{im}}(\gamma)}}
\noLine
\UnaryInfC{$\Gamma_{\mv{i}},\tms{\lbright{C}}{p_{\mv{i}}},\Lambda_{\mv{im}}$}
\DisplayProof
$
}
\newsavebox\outerleftproof
\savebox\outerleftproof{
$\AxiomC{$\left[\usebox\innerleftproof\right]_m$}
\RightLabel{$(\circ\lright)_i$}
\UnaryInfC{$\Gamma_\mv{i},\tms{\lbright{C}}{p_\mv{i}},\lbright{C}$}
\DisplayProof$
}
\newsavebox\innerrightproof
\savebox\innerrightproof{
$\AxiomC{\mydots{top_{jn}(\delta)}}
\noLine
\UnaryInfC{$\Delta_{\mv{j}},\tms{\lbleft{C}}{q_{\mv{j}}},\Sigma_{\mv{jn}}$}
\DisplayProof$
}
\newsavebox\outerrightproof
\savebox\outerrightproof{
$\AxiomC{$\left[\usebox\innerrightproof\right]_n$}
\RightLabel{$(\circ\lleft)_j$}
\UnaryInfC{$\Delta_\mv{j},\tms{\lbleft{C}}{q_\mv{j}},\lbleft{C}$}
\DisplayProof$
}
\hskip -14pt This transformation requires a global argument and, together with the symmetric \rightanalytic{}, is the most involved one. It generalizes the argument for $\BiInt$ given in Sec.~\ref{sec-idea}. We subdivide the construction of $\beta'$ into steps \textbf{(a)}-\textbf{(e)}.

\textbf{(a) Removing redundant cuts.} Call a multicut on $C$ redundant if $C$ appears in its conclusion. Redundant cuts can always be replaced by contractions and weakenings, e.g.
\begin{center}$
\AxiomC{$\Sigma,\{\lbleft{C}\}^s$}
\AxiomC{$\Pi,\lbleft{C},\{\lbright{C}\}^t$}
\RightLabel{$(mcut)$}
\BinaryInfC{$\Sigma,\Pi,\lbleft{C}$}
\DisplayProof
\leadsto
\AxiomC{$\Sigma,\{\lbleft{C}\}^s$}
\RightLabel{$(c)$}
\UnaryInfC{$\Sigma,\lbleft{C}$}
\RightLabel{$(w)$}
\UnaryInfC{$\Sigma,\Pi,\lbleft{C}$}
\DisplayProof$
\end{center}
Note that this replacement does not increase the rank of any multicut below. We therefore assume without loss of generality that all redundant multicuts on $C$ in $\beta$ have been removed.

\textbf{(b) Structuring the proof.} 
Call an inference in $\gamma$ or $\delta$ \emph{critical} if it is a lowermost inference in which an ancestor of multicut is principal. The critial inferences are instances of $(\circ\lright)$ in $\gamma$ and of $(\circ\lleft)$ in $\delta$.  We enumerate the former as $(\circ\lright)_{i\in I}$ and the latter as $(\circ\lleft)_{j\in J}$ and picture them as follows:
\[
\!\!\!\!\!\!\!\!
\usebox\outerleftproof
\usebox\outerrightproof
\]

Here $m,n$ are additional indices for the multiple premises 
of the critical inferences, each of which has a subproof $top_{im}(\gamma)$ or $top_{jn}(\delta)$. $\Lambda_{im}$ and $\Sigma_{jn}$ are the auxiliary formulas. We have also highlighted all further ancestors of multicut ($\{\lbright{C}\}^{p_i}$ and $\{\lbleft{C}\}^{p_j}$) that are not principal in the critical inference. Let us for now assume additionally that
\begin{align*}
\label{eq:nonempty}
    \tag{$\ast$}
    \text{for all }i\in I,\quad \Gamma_i\neq\emptyset
\end{align*}
as this is the more interesting case. We will deal with the remaining case at the very end.




We now identify the \emph{bottom part} $bot(\gamma)$ resp.\ $bot(\delta)$ as the subtree of $\gamma$ ($\delta)$ that contains all sequents that are not above a critical inference. Hence $bot(\gamma)$ is a deduction of $\Gamma,\{\lbright{C}\}^p$ from the conclusions of all critical inferences in $\gamma$, and similarly for $bot(\delta)$. No ancestor of multicut is principal in an inference in $bot(\gamma)$ or $bot(\delta)$. 
We can picture the whole proof as in Figure~\ref{fig:cut} (top diagram).


\begin{figure*}   

\begin{prooftree}
\AxiomC{$\left[\usebox\outerleftproof\right]_i$}
\noLine
\UnaryInfC{\mydots{bot(\gamma)}}
\noLine
\UnaryInfC{$\Gamma,\tms{\lbright{C}}{p}$}
\AxiomC{$\left[\usebox\outerrightproof\right]_j$}
\noLine
\UnaryInfC{\mydots{bot(\delta)}}
\noLine
\UnaryInfC{$\Delta,\tms{\lbleft{C}}{q}$}
\RightLabel{$(mcut)$}
\BinaryInfC{$\Gamma,\Delta$}
\end{prooftree}

\newsavebox\aleftinner
\savebox\aleftinner{
$\AxiomC{\mydots{\text{ axiom expansion}}}
\noLine
\UnaryInfC{$\Gamma_i,(\red{\mathcal{D}})^{p_i},\red{\mathcal{D}}$}
\DisplayProof$
}

\newsavebox\aleftouter
\savebox\aleftouter{
$\AxiomC{$\left[\usebox\aleftinner\right]_i$}
\noLine
\UnaryInfC{\mydots{bot(\gamma)\subst{\red{\mathcal{D}}}{C}}}
\noLine
\UnaryInfC{$\Gamma,(\red{\mathcal{D}})^p$}
\RightLabel{$w,c$}
\UnaryInfC{$\Gamma,\mathcal{D}$}
\DisplayProof
$
}

\newsavebox\arightinner
\savebox\arightinner{
$
\AxiomC{\mydots{top_{ij}}}
\noLine
\UnaryInfC{$\Delta_j,\Gamma_i$}
\RightLabel{$(w)$}
\UnaryInfC{$\Delta_j,(\Gamma_i)^{p_j},\Gamma_i$}
\DisplayProof
$
}

\newsavebox\arightouter
\savebox\arightouter{
$
\AxiomC{$\left[\usebox\arightinner\right]_j$}
\noLine
\UnaryInfC{\mydots{bot(\delta)\subst{\Gamma_i}{C}}}
\noLine
\UnaryInfC{$\Delta,(\Gamma_i)^q$}
\RightLabel{$w,c$}
\UnaryInfC{$\Delta,\mathcal{D}$}
\DisplayProof
$
}
\begin{prooftree}
    \AxiomC{$\left[\usebox\aleftouter\right]_{\mathcal{D}\in\mathbb{D}}$}
    \AxiomC{$\left[\usebox\arightouter\right]_{\mathcal{D}\in\mathbb{D}_i,i\in I}$}
    \dashedLine
    \RightLabel{cuts on $\mathcal{F}$ (Claim~\ref{claim:cutable})}
    \BinaryInfC{$\Gamma,\Delta$}
\end{prooftree}

\caption{(\textit{top diagram}) Proof $\beta$ ending in a multicut with the critical inferences highlighted. (\textit{bottom}) The proof $\beta'$ obtained from $\beta$ via analytic cutting.}
    \label{fig:cut}
\end{figure*}

\textbf{(c1) Substituting in $\gamma$.} Let $\mathcal{E}$ be any set consisting of, for every $i\in I$, some $\ell$-formula $A$ where $\bar{A}\in\Gamma_i$ (note that $\Gamma_i\neq\emptyset$ by (\ref{eq:nonempty})).   Let $bot(\gamma)\subst{\mathcal{E}}{C}$ denote the result of replacing every ancestor of multicut in $bot(\gamma)$ by $\mathcal{E}$:
\[
 \AxiomC{$\left[\Gamma_i,\{\lbright{C}\}^{p_i},\lbright{C}\right]_{i\in I}$}
\noLine
\UnaryInfC{\mydots{bot(\gamma)}}
\noLine
\UnaryInfC{$\Gamma,\{\lbright{C}\}^p$}\
\DisplayProof
\quad\leadsto\quad
 \AxiomC{$\left[\Gamma_i,\mathcal{E}^{p_j},\mathcal{E}\right]_{i\in I}$}
\noLine
\UnaryInfC{\mydots{bot(\gamma)[\mathcal{E}/C]}}
\noLine
\UnaryInfC{$\Delta,\mathcal{E}^p$}
\DisplayProof
\]
Why is the tree on the right again a deduction? For any $i\in I$ we have $\supp{\Gamma_i}\subseteq \context{\circ\lright}$ by inspection of $(\circ\lright)_i$, thus%
\begin{center}
$\mathcal{E}\subseteq\bigcup_{i\in I}\overline{\supp{\Gamma_i}}\subseteq\overline{\context{\circ\lright}}$
\end{center}
Since $\circ$ is inverse rightable we have $\overline{\context{\circ\lright}}\subseteq\context{\RR}$ for any simple rule $\RR$ which has $\lbright{C}\in\context{\RR}$, and so in particular $\mathcal{E}\subseteq\context{\RR}$.  By applying Lemma~\ref{lem:subst} to every rule $\RR$ in $bot(\gamma)$, we ultimately conclude that $bot(\gamma)\subst{\mathcal{E}}{C}$ is a deduction.

\textbf{(c2) Closing the new $\gamma$.} By (\ref{eq:nonempty}), every leaf of $bot(\gamma)\subst{\mathcal{E}}{C}$ contains both an $\ell$-formula $A$ (in $\Gamma_i$) and its inverse $\bar{A}$ (in $\mathcal{E}$), so it can be given a cut-free proof using axiom expansion.

\textbf{(d1) Substituting in $\delta$.} For $i\in I$, $bot(\delta)\subst{\Gamma_i}{C}$ denotes the result of replacing every ancestor of multicut in $bot(\delta)$ by~$\Gamma_i$:
\[
 \AxiomC{$\left[\Delta_j,\{\lbleft{C}\}^{q_j},\lbleft{C}\right]_{j\in J}$}
\noLine
\UnaryInfC{\mydots{bot(\delta)}}
\noLine
\UnaryInfC{$\Delta,\{\lbleft{C}\}^q$}\
\DisplayProof
\leadsto
 \AxiomC{$\left[\Delta_j,(\Gamma_i)^{q_j},\Gamma_i\right]_{j\in J}$}
\noLine
\UnaryInfC{\mydots{bot(\delta)[\Gamma_i/C]}}
\noLine
\UnaryInfC{$\Delta,(\Gamma_i)^q$}\
\DisplayProof
\]
Why is the tree on the right a deduction? We have $\supp{\Gamma_i}\subseteq \context{\circ\lright}$ by inspection of $(\circ\lright)_i$. Since $\circ$ is weakly leftable we furthermore have $\context{\circ\lright}\subseteq\context{\RR}$ for any simple rule $\RR$ which has $\lbleft{C^\circ}\in\context{\RR}$, and so in particular $\supp{\Gamma_i}\subseteq\context{\RR}$. Applying Lemma~\ref{lem:subst} to every rule $\RR$ in $bot(\delta)$, we ultimately conclude that $bot(\delta)\subst{\Gamma_i}{C}$ is a deduction.

\textbf{(d2) Closing the new $\delta$.} Fix a pair $(i,j)\in I\times J$. We obtain a $(d,r)$-reduced proof $top_{ij}$ of $\Gamma_i,\Delta_j$ by a construction that is completely analogous to the two steps of the \princred{} case, only that there are additional indices $i$ and $j$ around. Putting the proofs $top_{ij}$ (plus weakening) on top of $bot(\delta)[\Gamma_i/C]$ we obtain a $(d,r)$-reduced proof.   

\textbf{(e) Putting everything together with cuts.}

The proof $\beta'$ is now constructed as pictured in Figure~\ref{fig:cut}. The top left and top right parts are substituted derivations as described in \textbf{(c)} and \textbf{(d)} (the set $\red{\mathcal{D}}$ will be defined shortly). After that are cuts on formulas from the contexts $\Gamma_i$ ($i\in I$).

Here are the details. Define the set $\mathcal{F}$ as the collection of all formulas appearing in $\Gamma_i$ for some $i\in I$, with their labels stripped off. A \emph{distribution} of $\mathcal{F}$ is any set $\mathcal{D}$ of $\ell$-formulas obtained by labelling the formulas in $\mathcal{F}$. Call $\mathcal{D}$ \emph{$i$-matching} if $\supp{\Gamma_i}\subseteq\mathcal{D}$. Call $\mathcal{D}$ \emph{orthogonal} if it contains a subset $\red{\mathcal{D}}$ that consists of, for every $i$, an $\ell$-formula $A$ where $\bar{A}\in\Gamma_i$. It is easy to see that every distribution is $i$-matching for some $i$, or orthogonal (or both). Now let us denote $\mathbb{D}_i$ ($\mathbb{D}$) the class of $i$-matching (orthogonal) distributions of $\mathcal{F}$. By a simple combinatorial argument, we observe the following: 

\begin{claim}
\label{claim:cutable}
There is a deduction of $\Gamma,\Delta$ from the families of sequents $\Gamma,\mathcal{D}$ ($\mathcal{D}\in\mathbb{D}$) and  $\Delta,\mathcal{D}$ ($\mathcal{D}\in\mathbb{D}_i$, $i\in I$) whose only multicuts are on formulas in $\mathcal{F}$.  
\end{claim}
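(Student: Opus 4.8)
The plan is to prove, by induction, a statement slightly stronger than the Claim, because the conclusion $\Gamma,\Delta$ is not preserved under the natural recursion on $\mathcal{F}$. Concretely, I would establish the following: \emph{for any finite set $\mathcal{F}$ of formulas and any $\ell$-multisets $P,Q$, if for every distribution $\mathcal{D}$ of $\mathcal{F}$ at least one of the sequents $P,\mathcal{D}$ and $Q,\mathcal{D}$ is available, then there is a deduction of $P,Q$ from these sequents whose only multicuts are on formulas in $\mathcal{F}$.} Claim~\ref{claim:cutable} is then the instance $P=\Gamma$, $Q=\Delta$: for an orthogonal $\mathcal{D}\in\mathbb{D}$ the sequent $\Gamma,\mathcal{D}$ is among the given ones, for an $i$-matching $\mathcal{D}\in\mathbb{D}_i$ the sequent $\Delta,\mathcal{D}$ is, and since every distribution is orthogonal or $i$-matching for some $i$, at least one of $\Gamma,\mathcal{D}$ and $\Delta,\mathcal{D}$ is available for \emph{every} $\mathcal{D}$. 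In this way the whole combinatorial content is packaged into the hypothesis of the stronger statement, and the orthogonal/matching distinction plays no further role in the induction.

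I would prove the stronger statement by induction on $|\mathcal{F}|$. If $\mathcal{F}=\emptyset$, the only distribution is $\emptyset$, one of $P,Q$ is available, and a single weakening yields $P,Q$. For the inductive step, pick $F\in\mathcal{F}$ and set $\mathcal{F}'=\mathcal{F}\setminus\{F\}$; every distribution of $\mathcal{F}$ is uniquely of the form $\mathcal{D}'\cup\{\lbleft{F}\}$ or $\mathcal{D}'\cup\{\lbright{F}\}$ for a distribution $\mathcal{D}'$ of $\mathcal{F}'$. In the $\lbright{F}$-slice the available sequents read $(P,\lbright{F}),\mathcal{D}'$ or $(Q,\lbright{F}),\mathcal{D}'$, so applying the induction hypothesis with the augmented contexts $P,\lbright{F}$ and $Q,\lbright{F}$ gives a deduction of $P,Q,\{\lbright{F}\}^2$, which a contraction turns into $P,Q,\lbright{F}$. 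Symmetrically the $\lbleft{F}$-slice yields $P,Q,\lbleft{F}$. A single $(mcut)$ on $F$ (with both multiplicities equal to $1$) combines these into $P,Q,P,Q$, and a final contraction produces $P,Q$. Every multicut introduced is on $F$ or, by the induction hypothesis, on a formula of $\mathcal{F}'$, hence on a formula in $\mathcal{F}$, as required.

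The main obstacle is really just the discovery of the correct generalization to arbitrary contexts $P,Q$: without it the recursion would have to reassemble $\Gamma$ and $\Delta$ separately, which fails, since cutting two $\Gamma$-typed slices against each other collapses to $\Gamma$ alone and loses all trace of $\Delta$. Allowing the context to carry \emph{both} $P$ and $Q$ (and a spectator copy of the eliminated formula) is what lets the two sub-slices be treated by a single induction hypothesis. Once this is in place the remaining work is routine bookkeeping of labels and multiplicities: each $(mcut)$ is legitimate because exactly one $\lbright{F}$ and one $\lbleft{F}$ are present, and each contraction (for instance collapsing $P,Q,P,Q$ to $P,Q$) is immediate from the definition of contraction, as the support is unchanged and every formula occurs at least as often before contracting as after.
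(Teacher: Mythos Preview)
Your proof is correct and follows essentially the same idea as the paper's: both build a binary tree of multicuts on the formulas of $\mathcal{F}$ so that the $2^{|\mathcal{F}|}$ leaves enumerate all distributions, then close each leaf by weakening from the appropriate available sequent. The paper states this as a one-line bottom-up sketch, whereas you unfold it as a top-down induction with the generalized $P,Q$ hypothesis; the generalization is a convenient packaging for the recursion rather than a different argument.
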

\noindent This is easy to see bottom-up: If we systematically introduce cuts on all formulas in $\mathcal{F}$ above $\Gamma,\Delta$, we get $2^{|\mathcal{F}|}$ many premises containing every possible distribution of $\mathcal{F}$ (see also Lemma 3.1 in \cite{Tak92}). Now by the previous remark, every distribution is contained in one of the two families.

From this claim it follows that the proof $\beta'$ as in Fig.~\ref{fig:cut} is sound. 
We now show that $\beta'$ is $(d,r)$-reduced.

\begin{claim}
\label{claim:red}
Every formula $A\in\mathcal{F}$ is a subformula of $\Gamma$ or a proper subformula of $C$.
\end{claim}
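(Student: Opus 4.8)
The plan is to derive Claim~\ref{claim:red} from just two ingredients already available: the subformula property of the locally analytic subproof $\gamma$, and the non-analyticity of the lowermost multicut, which by definition says that $C$ is \emph{not} a subformula of any formula in its conclusion $\Gamma,\Delta$.

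First I would record the easy half. Since $\beta$ is locally analytic apart from its final multicut, the subproof $\gamma$ is itself locally analytic, and hence has the subformula property: every formula occurring in $\gamma$ is a subformula of its endsequent $\Gamma,\{\lbright{C}\}^p$, i.e.\ a subformula of $\Gamma$ or a subformula of $C$. Now each $A\in\mathcal{F}$ is the label-stripped form of some $\ell$-formula occurring in a context $\Gamma_i$ inside $\gamma$, so this already yields that $A$ is a subformula of $\Gamma$ or a subformula of $C$. It remains only to upgrade ``subformula of $C$'' to ``\emph{proper} subformula of $C$'' in the case where $A$ is not a subformula of $\Gamma$; equivalently, to rule out $A=C$.

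So suppose towards a contradiction that some occurrence of $C$ lies in $\Gamma_i$. By construction of the critical inference $(\circ\lright)_i$, the highlighted ancestors $\{\lbright{C}\}^{p_i}$ together with the principal formula $\lbright{C}$ exhaust the ancestors of multicut in its conclusion; hence the occurrence sitting in $\Gamma_i$ is a \emph{non-ancestor} of multicut. I would then trace this occurrence downward through $bot(\gamma)$ along the immediate ancestor relation (its label is irrelevant, since $\Gamma$ may contain formulas of either label). Non-ancestry is preserved downward, because if a descendant were an ancestor of a cut formula then so would be its immediate ancestor; in particular the traced occurrence can never be contracted together with a genuine ancestor of multicut. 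Consequently the trace terminates in exactly one of two ways: either the occurrence survives into the endsequent of $\gamma$ --- where the ancestors are precisely $\{\lbright{C}\}^p$, so it must land in $\Gamma$ --- or it is consumed as an auxiliary formula of some inference with principal formula $D$, in which case $C$ is a \emph{proper} subformula of $D$.

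Finally I would close both branches using non-analyticity. If the occurrence reaches $\Gamma$, then $C$ is a subformula of $\Gamma$ and hence of $\Gamma,\Delta$, contradicting the non-analyticity of the final multicut. If it is consumed as auxiliary of $D$, then the subformula property forces $D$ to be a subformula of $\Gamma$ or of $C$: in the former case $C$ (a subformula of $D$) is a subformula of $\Gamma$, again contradicting non-analyticity; in the latter case $C$ is a proper subformula of $D$ which is a subformula of $C$, i.e.\ $C$ is a proper subformula of itself, which is impossible. Either way we reach a contradiction, so $A\neq C$ and the claim follows. The one delicate point, and the step I would treat most carefully, is the bookkeeping of the ancestor relation through contractions in the downward trace --- making sure a non-ancestor occurrence of $C$ cannot quietly merge into the ancestor thread; this is exactly where the separation of $\Gamma_i$ from the highlighted $\{\lbright{C}\}^{p_i}$ is used.
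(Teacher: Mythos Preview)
Your argument has a genuine gap in the termination analysis of the downward trace. You assert that the trace ``terminates in exactly one of two ways'': the occurrence survives to the endsequent, or it is consumed as an auxiliary formula of a simple rule. But $\gamma$ is only \emph{locally analytic}, not cut-free, so $bot(\gamma)$ may contain multicuts; the traced occurrence of $C$ can therefore disappear as the \emph{cut formula} of such a multicut, and this is neither of your two cases (cut formulas have no principal formula $D$ to pass to). This third case is precisely where step~\textbf{(a)} of the construction---the removal of redundant multicuts on $C$---does its work: if the trace is swallowed by a multicut, that multicut is on $C$, hence by non-redundancy $C$ does not occur in its conclusion, and by local analyticity $C$ is then a \emph{proper} subformula of some $D$ in that conclusion; from there your contradiction via non-analyticity goes through exactly as in your auxiliary-formula case. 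So the gap is repairable, but as written the proof is incomplete, and the phrase ``exactly one of two ways'' is false.

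Modulo this, your route differs from the paper's and is arguably tidier. The paper traces an arbitrary $A\in\Gamma_i$ downward, restarting at each intermediate multicut by jumping (via local analyticity) to a containing formula in the conclusion, and only at the end uses non-redundancy to squeeze out properness. You instead invoke the subformula property of $\gamma$ up front to reduce everything to the single case $A=C$, and then exploit the \emph{non-analyticity} of the final multicut (which the paper's proof never mentions) to close each branch by contradiction. Your use of non-analyticity is a nice shortcut---it immediately kills the ``$C$ lands in $\Gamma$'' branch---whereas the paper's argument is more self-contained and shows the stronger dichotomy (subformula of $\Gamma$ \emph{or} proper subformula of $C$) without appealing to the status of the final cut.
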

\begin{proof}
Assume $A$ appears (labelled) in $\Gamma_i$. Follow this occurrence downwards in $\gamma$ via the ancestor relation. This process stops if we reach some $A'$ that is principal in a multicut, and therefore is not the ancestor of any formula in the conclusion. Then, as $\gamma$ is locally analytic, we can choose some $\ell$-formula $A''$ in the conclusion of the multicut that contains $A'$ as a subformula. Now follow $A''$ downwards.
In this way we finally reach some $\ell$-formula $A^*$ in $\Gamma,\{\lbright{C}\}^p$. By construction $A$ is a subformula of $A^*$, so we are done if $A^*\in\Gamma$. Else, $A^*\in\{\lbright{C}\}^p$. As $\lbx{A}$ was not an ancestor of the multicut on $C$---see \textbf{(b)}---by following $A$ downwards we encounter an occurrence $A'$ that was principal in a multicut.
Following down as before we ultimately reach $A^*$, so $A'$ is a subformula of $C$. Since $\gamma$ contains no redundant cuts, $A'\neq C$ and so $A'$ must be a proper subformula of $C$, and hence so is $A$.  \renewcommand\qedsymbol{$\Box$ (Claim)}
\end{proof}
\noindent By Claim~\ref{claim:red}, all the cuts on $\mathcal{F}$ we introduce at the bottom of the proof are either analytic or of degree $<d$.
\begin{claim}
Every non-analytic multicut in $bot(\gamma)\subst{\mathcal{E}_\mathcal{D}}{C}$ and $bot(\delta)\subst{\Gamma_i}{C}$ is on a proper subformula of $C$.
\end{claim}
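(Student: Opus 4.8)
The plan is to use that $\gamma$ and $\delta$ are locally analytic (an assumption of the Reduction Lemma), so every multicut occurring in $bot(\gamma)$ and $bot(\delta)$ is analytic \emph{before} the substitution, and then to track exactly how analyticity can be lost. The guiding observation is that the substitution $\subst{\mathcal{E}_\mathcal{D}}{C}$ (resp.\ $\subst{\Gamma_i}{C}$) only ever rewrites \emph{whole} ancestor occurrences of the outer cut formula and never the internal structure of a formula, so the cut formula of an inner multicut is left completely untouched — only its conclusion can change.

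First I would identify the replaced occurrences. Since no ancestor of multicut is principal anywhere in $bot(\gamma)$ (all critical inferences lie strictly above it), every ancestor of multicut there is literally an occurrence of $\lbright{C}$, and symmetrically every ancestor in $bot(\delta)$ is an occurrence of $\lbleft{C}$; these are exactly the occurrences that get substituted. Moreover, the cut formula $D$ of any inner multicut is not affected: its cut occurrences lie in the premises and are consumed, hence by the ancestor relation for $(mcut)$ they are not ancestors of the outer $C$ and are thus never substituted.

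Next I would analyse when an inner multicut that was analytic becomes non-analytic. Being analytic before the substitution means some $\ell$-formula in its conclusion has $D$ as a subformula. If after substitution the multicut is non-analytic, then $D$ is a subformula of no \emph{surviving} conclusion formula; since it was analytic beforehand, at least one original witness must have been removed, i.e.\ it was an ancestor of the outer $C$. By the previous paragraph that witness is $\lbright{C}$ (in $bot(\gamma)$) or $\lbleft{C}$ (in $bot(\delta)$), and therefore $D$ is a subformula of $C$.

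It remains to sharpen ``subformula'' to ``proper subformula'', and this is the step I expect to be the crux, as it is the only one invoking the preprocessing of step \textbf{(a)}. Suppose $D = C$. Then the inner multicut cuts on $C$ while carrying, as a context formula in its conclusion, a witness that is an ancestor of the outer $C$ — hence an occurrence of $C$ itself. Thus $C$ appears in the conclusion of a multicut on $C$, making that multicut redundant; but all redundant multicuts on $C$ were eliminated in step \textbf{(a)}, a contradiction. Hence $D \neq C$, i.e.\ $D$ is a proper subformula of $C$. The case $bot(\delta)\subst{\Gamma_i}{C}$ is identical after exchanging the left/right labels.
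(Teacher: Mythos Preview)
Your proposal is correct and follows essentially the same approach as the paper's own proof, which is considerably terser: the paper simply notes that $bot(\gamma)$ is locally analytic, so a multicut can become non-analytic only if its cut formula was a subformula of $C$, and then invokes step~\textbf{(a)} to exclude $C$ itself. Your version spells out explicitly why the inner cut formula~$D$ is untouched by the substitution and why the lost witness must be an occurrence of~$C$, but the underlying argument is identical.
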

\begin{proof}
By symmetry, it suffices to consider $bot(\gamma)\subst{\Delta}{C}$. Recall that $bot(\gamma)$ is locally analytic. Hence a multicut in $bot(\gamma)\subst{\Delta}{C}$ can only become non-analytic if it was on a subformula of $C$. Once again $C$ itself is ruled out, as we have removed all redundant multicuts in step $\textbf{(a)}$.
\renewcommand\qedsymbol{$\Box$ (Claim)}
\end{proof}
\noindent It follows that $\beta'$ is $(d,r)$-reduced.

Let us finally demonstrate the much simpler construction in the case that (\ref{eq:nonempty}) fails, meaning that $\Gamma_{i_0}=\emptyset$ for some $i_0\in I$. In this case the transformed proof is essentially just the right part of Fig.~\ref{fig:cut} (bottom diagram) where $i=i_0$, and no analytic cuts need to be introduced. Indeed:
We construct the proofs $top_{i_0j}$ ($j\in J$) and $bot(\delta)[\Gamma_{i_0}/C]=bot(\delta)[\emptyset/C]$ as before. But now the endsequent of $bot(\delta)[\emptyset/C]$ is $\Delta$, so we can go directly to $\Gamma,\Delta$ via weakening.

\trans{\rightanalytic}{$C=C^\circ$, $\circ$ is weakly rightable and inverse leftable}
Symmetric to \leftanalytic.
\subsection*{(B) Completeness of the reduction steps}
We show that always at least one reduction step applies.
If $C$ is a variable, then \renaming{} applies. 
Assume $C=C^\circ$ for some class 2 connective $\circ$.
If $\circ$ is left and right invertible, \inversion{} applies. If it is weakly leftable and inverse rightable, then \leftanalytic{} applies. 
If it is weakly rightable and inverse leftable, then \rightanalytic{} applies.
%
Two cases are left.
Assume that $\circ$ is leftable and weakly rightable. We make a case distinction: If no ancestor of the multicut is principal in $\RR(\delta)$, then \rightshift{} is applicable. If not, then $\RR(\delta)$ is $(\circ\lleft)$ and therefore $\supp{\Delta}\subseteq\context{\circ\lleft}$. If moreover some ancestor of multicut is principal in $\RR(\gamma)$, then \princred{} applies. If on the other hand no ancestor of multicut is principal in $\RR(\gamma)$, then \leftshift{} is applicable.
The case  $\circ$ is rightable and weakly leftable is symmetric.

We have shown that there is always some reduction that applies. This concludes the proof of the Reduction Lemma.

\begin{remark}
Each step in cut-restriction only ``locally'' improves the analyticity of the proof. That is, if an uppermost non-analytic cut has been made analytic by a reduction step it might still not be \emph{globally analytic}, i.e.\ its cut formula might still not be a subformula of the endsequent. This is because of other non-analytic cuts below it. But the algorithm will revisit such cuts (once they become non-analytic through substitution), and only when all cuts are (locally) analytic, they also become 
globally analytic. 
\end{remark}

\section{Applications of the Main Theorem}
\label{sec:applications}
We illustrate the practicality of our method by showing that 
various calculi have the analytic cut property. This amounts to demonstrating that their connectives are in class 2, as the other requirements---axiom expansion, principal case reductions and consistency---are readily verified by standard methods. 


\begin{corollary}
    $\SF$ has the analytic cut property.
\end{corollary}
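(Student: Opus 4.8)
The plan is to obtain the corollary as a direct instance of Theorem~\ref{thm:mainclasstwo}: since that theorem yields the analytic cut property for every class~2 standard calculus, it suffices to verify that $\SF$ is such a calculus. By Example~\ref{ex:calculi}, $\SF$ is the standard $\Lan_0\cup\{\Box\}$-calculus obtained from $\LK$ by adjoining the rules $(T)$ and $(5)$, so it is a standard sequent calculus by construction, and the task reduces to checking the four conditions of Definition~\ref{def:classtwo}.

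First I would dispatch the three ``global'' conditions. Consistency holds by exhibiting a Kripke model on a single reflexive world, which validates all axioms and rules while falsifying some formula, as indicated after Definition~\ref{def:consistency}. Axiom expansion and principal case reductions are the standard, readily checked properties noted just before the corollary to hold for all calculi of Example~\ref{ex:calculi}. This leaves the substantive requirement: that every connective of $\SF$ is class~2.

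For the propositional connectives $\circ\in\{\land,\lor,\to,\bot\}$, Example~\ref{ex:modalities}(1) records that each is both leftable and rightable. By the Fact following Definition~\ref{def:substitution}, rightable implies weakly rightable; hence each such $\circ$ is leftable and weakly rightable, placing it in case~(2) of the class~1 definition, and therefore class~2. For $\Box$, the same example shows it is neither leftable nor rightable, and not even weakly rightable, so it fails to be class~1; but it \emph{is} weakly leftable and inverse rightable (boxed formulas are permitted in every context, and $\overline{\context{5}}=\context{5}$, so inversion is harmless). This is exactly case~(4) of the class~2 definition, so $\Box$ is class~2.

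The conceptual crux—light on computation but central to the paper—is precisely the class~2 verification for $\Box$: the weakly-leftable/inverse-rightable pairing is the novel combination introduced to capture modalities like those of $\SF$, and it is what the earlier discussion of the noneliminable $\SF$ cut was leading toward (the analytic-cutting reduction \leftanalytic{} is then the step that does the real work inside Theorem~\ref{thm:mainclasstwo}). Once all connectives are confirmed class~2, $\SF$ is a class~2 calculus and Theorem~\ref{thm:mainclasstwo} immediately yields the analytic cut property.
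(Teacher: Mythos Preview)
Your proposal is correct and follows essentially the same route as the paper: verify that $\SF$ is a class~2 standard calculus and invoke Theorem~\ref{thm:mainclasstwo}, with the non-$\Box$ connectives handled via leftability/rightability (hence class~1) and $\Box$ handled via weak leftability plus inverse rightability (case~(4) of class~2). The only difference is that you spell out the global conditions (consistency, axiom expansion, principal case reductions) explicitly, whereas the paper relegates these to the prefatory remark before the corollary.
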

\begin{proof}
The only rule of $\SF$ with a context restriction is $(\Box\lright)$, and it admits only boxed formulas as contexts. Hence all connectives $\circ\neq\Box$ are left- and right-shiftable, and therefore class~1. Furthermore $\Box$ is inverse rightable and weakly leftable (Ex.~\ref{ex:modalities}), and so the whole calculus is class 2.
\end{proof}

\emph{Multi-modal $\SF$} is the extension of $\LK$ with multiple modalities $\Box_1,\ldots,\Box_n$. The rules for $\Box_i$ are obtained by indexing with $i$ the modalities in the $\SF$ rules
$(\Box\lleft)$, and $(\Box\lright)$, 
so that $\context{\Box_i\lright}$ contains  all formulas prefixed by $\Box_i$. 

By the same argument as above, we obtain:

\begin{corollary}
    Multi-modal $\SF$ has the analytic cut property.
\end{corollary}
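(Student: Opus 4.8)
The plan is to invoke the Main Theorem (Theorem~\ref{thm:mainclasstwo}): it suffices to verify that multi-modal $\SF$ is a class~2 calculus. The side conditions—consistency, axiom expansion, and principal case reductions—carry over essentially verbatim from single-modal $\SF$. Consistency follows from a Kripke model with a single world on which every accessibility relation $R_i$ is reflexive (each $\Box_i$ then validates the $(T_i)$ and $(5_i)$ axioms while some variable can be falsified), and axiom expansion and principal case reductions for each $\Box_i$ have exactly the shape of the single-modal $\Box$ case. Thus the entire content of the argument is to show that \emph{every connective is class~2}.

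First I would dispose of the non-modal connectives $\circ\in\{\land,\lor,\to,\bot\}$. The only rules carrying a context restriction are the $(\Box_j\lright)$, whose contexts admit only $\Box_j$-prefixed $\ell$-formulas. Since a formula $C^\circ$ with $\circ$ non-modal is never $\Box_j$-prefixed, neither $\lbleft{C^\circ}$ nor $\lbright{C^\circ}$ can lie in a restricted context; hence these connectives occur only in contexts of unrestricted rules, which makes them leftable and rightable, and therefore class~1, exactly as in $\LK$.

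The heart of the argument is each modality $\Box_i$, which I claim is weakly leftable and inverse rightable (class~2, item~4), just as $\Box$ in single-modal $\SF$ (Ex.~\ref{ex:modalities}). The key observation—and the point where one must check that distinct modalities do not interfere—is that for $C^{\Box_i}=\Box_i F$, the $\ell$-formulas $\lbleft{\Box_i F}$ and $\lbright{\Box_i F}$ belong to a \emph{restricted} context $\context{\Box_j\lright}$ only when $i=j$, because $\context{\Box_j\lright}$ collects precisely the $\Box_j$-prefixed formulas. Consequently, whenever a premise of a substitution property places $\lbleft{\Box_i F}$ (resp.\ $\lbright{\Box_i F}$) inside some $\context{\RR}$ with $\RR$ restricted, that rule must be $(\Box_i\lright)$ itself, and the required inclusion collapses to $\context{\Box_i\lright}\subseteq\context{\Box_i\lright}$. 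For inverse rightability I would additionally note the self-inversion $\overline{\context{\Box_i\lright}}=\context{\Box_i\lright}$, since inverting labels fixes the symmetric set consisting of all $\lbleft{\Box_i F}$ and $\lbright{\Box_i F}$, so the label-swap is harmless; weak leftability holds because $\context{\Box_i\lright}$ is trivially contained in itself and in every unrestricted context.

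With all connectives shown class~2, multi-modal $\SF$ is a class~2 calculus and the analytic cut property follows from the Main Theorem. I expect the only genuine obstacle to be confirming that cross-modal contexts are vacuous—that $\Box_i$-formulas never intrude into a $\Box_j$-restricted context for $i\neq j$—since this is exactly what lets the single-modal verification run ``by the same argument''; everything else is a routine transcription of the $\SF$ case.
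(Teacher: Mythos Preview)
Your proposal is correct and follows exactly the approach the paper indicates: the paper proves this corollary merely by writing ``by the same argument as above'' (referring to the single-modal $\SF$ case), and you have faithfully expanded that argument, including the one non-trivial check that a $\Box_i$-formula lands in a restricted context $\context{\Box_j\lright}$ only when $i=j$, which is precisely what makes the single-modal verification of weak leftability and inverse rightability go through unchanged for each modality.
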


\begin{corollary}
    $\BiInt$ has the analytic cut property.
\end{corollary}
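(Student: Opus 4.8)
The plan is to obtain the corollary as an immediate instance of the Main Theorem (Theorem~\ref{thm:mainclasstwo}), which delivers the analytic cut property for every class~2 standard calculus. It therefore suffices to verify that $\BiInt$ is class~2. By Definition~\ref{def:classtwo} this amounts to two tasks: discharging the three background conditions---consistency, principal case reductions, and axiom expansion---and classifying each of the five connectives $\land,\lor,\to,\bot,\coimp$ as class~2. The background conditions require no new argument, since $\BiInt$ is one of the calculi in Ex.~\ref{ex:calculi}, all of which are consistent and satisfy axiom expansion and principal case reductions by standard methods. The proof thus reduces, exactly as in the preceding corollaries for $\SF$, to a connective-by-connective classification.

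For the intuitionistic connectives $\land$, $\lor$, and $\bot$ I would use class~1 condition~(1): they are both left- and right-invertible. I would verify this through Lemma~\ref{lem:invertible}, whose two hypotheses are easy to check here. Each of these connectives is introduced by a single tuple for any given principal formula, so the uniqueness hypothesis holds; and the only $\BiInt$ rules carrying a context restriction are $(\to\lright)_M$ (admitting left formulas only) and $(\coimp\lleft)$ (admitting right formulas only). Whenever the principal formula of one of these invertible connectives lies in such a restricted context, all of its auxiliary formulas carry the same label and hence remain inside the context, so the second hypothesis of Lemma~\ref{lem:invertible} is met. The residuated connectives are handled by the analysis already carried out in Ex.~\ref{BI-Int}: $\to$ is weakly leftable and inverse rightable, matching class~2 condition~(4), while $\coimp$ is its mirror image---weakly rightable and inverse leftable---matching class~2 condition~(5). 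With all connectives shown class~2, $\BiInt$ is a class~2 calculus and Theorem~\ref{thm:mainclasstwo} gives the analytic cut property.

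There is no real obstacle, as the argument runs parallel to the $\SF$ corollary; the one point that repays care is reading the context restrictions correctly and understanding why $\to$ and $\coimp$ fail to be class~1. Adjoining $\coimp$ to Maehara's calculus destroys the rightability of $\to$ that held there, because $(\coimp\lleft)$ now supplies a right context into which a context occurrence $\lbright{(A\to B)}$ can fall but which forbids substitution by arbitrary left formulas. It is precisely the weaker \emph{inverse} rightable property---substituting with the sides of formulas swapped---that keeps the cut-restriction transformation legal, and this is exactly what places $\BiInt$ in class~2 rather than class~1. Checking this side-swapped compatibility against each rule is the step I would perform most carefully.
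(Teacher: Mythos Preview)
Your proposal is correct and follows essentially the same route as the paper: invoke the Main Theorem by showing $\BiInt$ is class~2, use Lemma~\ref{lem:invertible} to certify that $\land,\lor,\bot$ are left- and right-invertible (hence class~1), and appeal to Ex.~\ref{BI-Int} for the class~2 status of $\to$ and $\coimp$. Your added explanation of why the hypotheses of Lemma~\ref{lem:invertible} hold and why $\to$ drops from class~1 to class~2 once $\coimp$ is adjoined is accurate and a useful elaboration of what the paper leaves implicit.
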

\begin{proof}
 In $\BiInt$, $\bot$, $\land$ and $\lor$ are left- and right-invertible by Lemma~\ref{lem:invertible},
 and thus class 1. The implication connectives are class 2: $\to$ is weakly leftable and inverse rightable, whereas $\coimp$ is weakly rightable and inverse leftable (Ex.~\ref{BI-Int}). 
\end{proof}

\begin{corollary}
$\GF$ (Ex.~\ref{ex:modalities}) has the analytic cut property.
\end{corollary}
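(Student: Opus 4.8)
The plan is to reduce everything to the Main Theorem (Theorem~\ref{thm:mainclasstwo}): since that theorem gives the analytic cut property for every class~2 standard calculus, it suffices to verify that $\GF$ meets Definition~\ref{def:classtwo}. The global hypotheses are the routine part. Consistency is obtained by the standard device of validating all axioms and rules in a single reflexive Kripke world (exactly as for $\SF$), so the empty sequent is not provable; axiom expansion and principal case reductions for $\bot,\land,\lor,\to$ are inherited verbatim from Maehara's calculus, and for $\Box$ they coincide with the $\SF$ reductions already in place. The substantive task, and where I would spend the effort, is to certify that \emph{every} connective of $\GF$ is class~2.

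For the non-modal connectives I would proceed as follows. For $\bot,\land,\lor$ I apply the sufficient condition of Lemma~\ref{lem:invertible} to conclude that each is both left- and right-invertible, hence class~1. The only non-vacuous check is clause~(2): the sole restricted rules are $(\to\lright)_\Box$ and $(5)$, and a $\bot$-, $\land$- or $\lor$-formula can enter $\context{(\to\lright)_\Box}=\{\lbleft{F}\}\cup\{\lbright{\Box F}\}$ only on the left (as some $\lbleft{F}$), whose left auxiliaries $\lbleft{A},\lbleft{B}$ are again left formulas and so remain in $\context{(\to\lright)_\Box}$; on the right such a formula never meets a restricted context, so there the condition is vacuous. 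For $\to$ I reuse Example~\ref{ex:modalities}(2): no restricted context contains $\lbright{(A\to B)}$, so $\to$ is rightable, and since $\lbleft{(A\to B)}$ can only land in $\context{(\to\lright)_\Box}$ (where $\context{(\to\lright)_\Box}\subseteq\context{(\to\lright)_\Box}$ holds trivially), $\to$ is weakly leftable; being rightable and weakly leftable, $\to$ is class~1. Note that $\land,\lor,\bot$ are \emph{not} class~1 by the leftable/rightable route (their right rules carry no restriction, hence are not weakly leftable), which is why the invertibility route is needed for them.

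The main obstacle, and the only genuinely modal step, is $\Box$, which is neither leftable nor rightable and so can only be class~2. Following Example~\ref{ex:modalities}(2) I would verify that $\Box$ is weakly leftable and inverse rightable. The decisive facts are that $\context{5}$ consists exactly of the boxed formulas on both sides, that $\overline{\context{5}}=\context{5}$ (so label inversion is harmless here), and that the only two rules in whose contexts a boxed formula can appear, $(\to\lright)_\Box$ and $(5)$, both admit every boxed formula on either side. The crux is the inclusion $\context{5}\subseteq\context{(\to\lright)_\Box}$, which holds precisely because the right-boxed formulas $\lbright{\Box F}$ of $\context{5}$ are absorbed by the $\lbright{\Box F}$ slot of $(\to\lright)_\Box$. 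This is exactly the point at which the choice of $(\to\lright)_\Box$ over Maehara's $(\to\lright)_M$ becomes indispensable: in $\mathbf{G3}$, which uses $(\to\lright)_M$ and therefore permits no right context on that rule, $\Box$ fails weak leftability and the strategy collapses, consistent with the fact that $\GF$ (and not $\mathbf{G3}$) is the calculus we can treat. With all connectives classified as class~2, $\GF$ is a class~2 calculus and Theorem~\ref{thm:mainclasstwo} yields the analytic cut property; I would emphasise that this is a strictly stronger conclusion than cut-elimination, which is known to fail for $\GF$.
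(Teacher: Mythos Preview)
Your proposal is correct and follows essentially the same approach as the paper: verify that $\GF$ is a class~2 calculus by handling $\bot,\land,\lor$ via invertibility (Lemma~\ref{lem:invertible}), $\to$ via rightable plus weakly leftable, and $\Box$ via weakly leftable plus inverse rightable, then invoke Theorem~\ref{thm:mainclasstwo}. Your write-up is simply more detailed than the paper's one-paragraph proof, and your observation that the non-modal connectives cannot be shown class~1 via the leftable/rightable routes (forcing the invertibility argument) is a helpful addition.
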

\begin{proof}
In $\GF$ the connectives $\bot$, $\land$ and $\lor$ are invertible. $\to$ is rightable and weakly leftable and therefore class 1 as well. Finally $\Box$ is weaky leftable and inverse rightable (Ex.~\ref{ex:modalities}), hence the calculus is class 2.
\end{proof}

\begin{corollary}
    $\BiInt^{\SF}$ (Ex.~\ref{ex:modalities}) has the analytic cut property.
\end{corollary}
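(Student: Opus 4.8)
The plan is to invoke the Main Theorem (Theorem~\ref{thm:mainclasstwo}): it suffices to show that $\BiInt^{\SF}$ is a class~2 standard calculus. I would dispatch the structural prerequisites first, since they are routine—consistency, principal case reductions and axiom expansion all follow by the standard methods already used for the calculi of Ex.~\ref{ex:calculi} and Ex.~\ref{ex:modalities} (a consistency model comes from a single reflexive world, and the principal reductions and axiom expansions for $\land,\lor,\to,\coimp,\Box$ are the usual ones). Thus the real content of the corollary is to certify that every connective is class~2.

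For the connectives of the classical fragment, $\bot$, $\land$ and $\lor$ carry their invertible rules, so Lemma~\ref{lem:invertible} makes them left- and right-invertible, hence class~1. For the remaining connectives I would simply collect the classification established in Ex.~\ref{ex:modalities}(3): $\to$ is weakly leftable and inverse rightable, so it is class~2 by case~(4) of the class~2 definition; $\Box$ is likewise weakly leftable and inverse rightable, again case~(4); and $\coimp$ is weakly rightable and inverse leftable, so it is class~2 by case~(5). Since every connective is class~2 and the structural conditions hold, the calculus is class~2 and Theorem~\ref{thm:mainclasstwo} delivers the analytic cut property.

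The step requiring genuine care—and the reason this does not collapse to the $\GF$ corollary—is the interaction of the newly added coimplication rules with the \emph{inverse} substitution properties, since the only rules with non-maximal context restrictions are the boxed rules $(\to\lright)_\Box$, $(\coimp\lleft)_\Box$ and $(5)$. Two facts, read off directly from the context restrictions in Fig.~\ref{fig:examplerules}, carry the argument. The first is that boxed formulas are admitted in both the left and the right context of \emph{every} rule; this gives $\context{5}\subseteq\context{\RR}$ for all $\RR$ and so secures the weakly-leftable and inverse-rightable status of $\Box$ (using $\overline{\context{5}}=\context{5}$) as well as the weakly-leftable part for $\to$. The second, and more delicate, is the pair of exact identities $\overline{\context{(\to\lright)_\Box}}=\context{(\coimp\lleft)_\Box}$ and $\overline{\context{(\coimp\lleft)_\Box}}=\context{(\to\lright)_\Box}$: inverting the side-labels of the context of each boxed rule yields precisely the context of the other. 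These are exactly what the inverse-rightable obligation for $\to$ against $(\coimp\lleft)_\Box$, and the inverse-leftable obligation for $\coimp$ against $(\to\lright)_\Box$, demand. Every other obligation is vacuous, since a non-boxed principal formula such as $A\to B$ or $A\coimp B$ cannot sit in a context restricted to boxed formulas (for instance $\lbleft{(A\to B)}\notin\context{(\coimp\lleft)_\Box}$), or trivial, as $(\coimp\lright)$, $(T)$, $\land$, $\lor$ and $\bot$ have maximal contexts. Confirming this handful of containments is the entire argument, after which the corollary is immediate from Theorem~\ref{thm:mainclasstwo}.
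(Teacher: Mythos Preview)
Your proposal is correct and takes essentially the same approach as the paper: verify that $\bot,\land,\lor$ are invertible, that $\to$ and $\Box$ are weakly leftable and inverse rightable, and that $\coimp$ is weakly rightable and inverse leftable, then invoke Theorem~\ref{thm:mainclasstwo}. You simply unpack in more detail the context computations (in particular the duality $\overline{\context{(\to\lright)_\Box}}=\context{(\coimp\lleft)_\Box}$) that the paper leaves implicit in its reference to Ex.~\ref{ex:modalities}.
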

\begin{proof} Once more $\bot$, $\land$ and $\lor$ are invertible. $\to$ is weakly leftable and inverse rightable, whereas $\coimp$ is weakly rightable and inverse leftable. Finally $\Box$ is weakly leftable and inverse rightable (Ex.~\ref{ex:modalities}). So the calculus is class 2.
\end{proof}

\section{Cut-elimination}
\label{sec:classone}
We have shown that every class 2 standard calculus has the analytic cut property. We show here that under further assumptions we can recover cut-elimination. 

In particular, we are now required to eliminate cuts on variables. This was not needed for cut-restriction, as
%
such cuts can be made analytic by a simple substitution. For elimination, cuts on variables must be treated in essentially the same way as cuts on compound formulas, and we therefore introduce an analogous property to leftability/rightability.

\begin{definition}
$\SC$ satisfies \emph{leftable variables} (\emph{rightable variables}) if for every simple rule $\RR$, if $\lbleft{x}\in\context{\RR}$ ($\lbright{x}\in\context{\RR}$) for some variable $x$, then $\context{\RR}$ has no context restriction.
\end{definition}

\begin{example}
Maehara's calculus satisfies rightable variables: The only rule with a context restriction is $(\to\lright)$, and we have $\lbright{x}\notin\context{\to\lright}$. It does not satisfy leftable variables.
\end{example}

\begin{definition}
    A standard calculus is \emph{class 1} if it satisfies principal case reductions, axiom expansion, leftable or rightable variables, and every connective in it is class 1.
\end{definition}

\begin{theorem}\label{thm-classone}
    Every class 1 calculus admits 
    cut-elimination.
\end{theorem}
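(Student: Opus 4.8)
The plan is to mirror the architecture of the Main Theorem (Thm.~\ref{thm:mainclasstwo}): prove a class~1 analogue of the Reduction Lemma (Lemma~\ref{thm:classtwo}) and then run the same lexicographic (degree, rank) induction. The only conceptual change is that everywhere ``locally analytic'' is strengthened to ``cut-free''. Concretely, I would formulate the class~1 Reduction Lemma as: if $\beta$ is a proof whose last inference is a single cut of degree $d$ and rank $r$, and whose two immediate subproofs $\gamma,\delta$ are cut-free, then $\beta$ can be rewritten into a proof of the same endsequent in which every cut has degree $<d$, or has degree $d$ with rank $<r$ and sits below no other cut of degree $d$. Feeding this into the induction of the proof of Thm.~\ref{thm:mainclasstwo} (reading ``cut'' for ``non-analytic cut'' throughout) then yields a cut-free proof, establishing the theorem. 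Note that consistency, the one class~2 ingredient a class~1 calculus need not possess, is never invoked, because it was used only by \renaming{}, which we replace below.

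For a cut on a compound formula $C=C^\circ$ with $\circ$ a class~1 connective, I would reuse verbatim the (A1) \emph{Gentzen-style reductions}: \inversion{} when $\circ$ is left- and right-invertible, and the trio \princred{}/\rightshift{}/\leftshift{} in the leftable-and-weakly-rightable and rightable-and-weakly-leftable cases, following the case split of the completeness argument (B). The point is that a class~1 connective never triggers the (A2) reductions \leftanalytic{}/\rightanalytic{}, so no analytic cuts are ever introduced. Each (A1) step either replaces the cut by cuts on proper subformulas of $C$ (via \princred{}, hence strictly smaller degree) or shifts it one rule upward (via the shifts, hence strictly smaller rank), with intervening weakenings and contractions absorbed by the Substitution Lemma (Lemma~\ref{lem:subst}) exactly as before; the cut-free clause of Def.~\ref{def:invertibility} guarantees that \inversion{} preserves cut-freeness of the resulting subproofs.

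The genuinely new ingredient is the treatment of cuts on a variable $x$. In cut-restriction these were dispatched by \renaming{}, which only makes the cut analytic and needs consistency; here they must instead be \emph{eliminated}, and this is precisely what the \emph{leftable or rightable variables} hypothesis buys us, playing for atoms the role that leftability/rightability play for connectives. Assume WLOG rightable variables. Since the only inference making a variable principal is $(id)$, I would branch on whether an ancestor of the cut variable is principal in $\RR(\gamma)$: if it is, then $\gamma$ is the initial sequent $\lbleft{x},\lbright{x}$ and the cut is removed outright by \eliminitial{} (its conclusion $\Delta,\lbleft{x}$ is obtained from $\delta$, which proves $\Delta,\{\lbleft{x}\}^q$ with $q\neq 0$, by contraction); if it is not, rightable variables legalises the succedent-shift (the atom analogue of \leftshift{}), pushing the cut into $\gamma$ and substituting $\Delta$ for $\lbright{x}$ through Lemma~\ref{lem:subst}, strictly lowering the rank. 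The subcase in which $\RR(\gamma)$ is itself a multicut on $x$ is handled exactly as the \emph{special case} of the shift reductions.

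I expect the main obstacle to be this variable case: verifying that \emph{leftable/rightable variables} is exactly strong enough to march an atomic cut up to the initial sequents that create it, and checking that the resulting \eliminitial{}/shift pair meshes with the multicut bookkeeping (the multiplicities $p,q$ and the contraction that closes \eliminitial{}) and with the (degree, rank) measure as cleanly as the connective reductions do. Once that dichotomy is shown complete---variable cuts fall to \eliminitial{} or the succedent shift, and compound cuts fall to the (A1) reductions of argument (B)---completeness of the reduction set is immediate, and the induction closes to give full cut-elimination.
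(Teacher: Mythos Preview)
Your proposal is correct and follows essentially the same approach as the paper's own proof sketch: drop the (A2) reductions (\leftanalytic{}, \rightanalytic{}, \renaming{}), keep the (A1) reductions for compound cut-formulas, and replace \renaming{} for atomic cuts by the shift afforded by leftable/rightable variables, terminating at an initial sequent; the lexicographic induction on (degree, rank) then closes exactly as in Theorem~\ref{thm:mainclasstwo} with ``cut-free'' in place of ``locally analytic''. One minor remark: since you assume $\gamma,\delta$ cut-free, the ``special case'' where $\RR(\gamma)$ is itself a multicut on $x$ cannot arise, so that clause is superfluous (though harmless).
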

\begin{proof}[Proof (sketch)]
    Similar to Th.~\ref{thm:classtwo}. Since all connectives are class~1, the reductions \leftanalytic{} and \rightanalytic{} which would introduce analytic cuts
    never need to be applied. Also avoid the reduction \renaming. Instead, shift cuts on variables upwards by adopting \leftshift{} (if $\SC$ has rightable variables) 
    or \rightshift{} 
    (if $\SC$ has leftable variables) 
    until they are principal in one premise of the cut. In this case the premise is an initial sequent, and the cut can be omitted.
\end{proof}

\begin{remark}
\label{rem:class12}
Checking that a calculus is class 1 or 2 is not a modular task: If we extend, say, a class 1 calculus with a new connective we will have to ``re-evaluate'' the status of all old connectives. This failure of modularity is to be expected: For example, both fragments of $\BiInt$ in the languages $\{\land,\lor,\bot,\to\}$ and $\{\land,\lor,\bot,\coimp\}$ are class 1 and thus satisfy 
cut-elimination, but $\BiInt$ itself is only class 2.
\end{remark}

\section{Conclusions}

We introduced cut-restriction, an algorithm transforming proofs with arbitrary cuts into proofs with analytic cuts. The result is obtained through language-independent sufficient conditions. Our methodology encompasses existing results in a uniform way, and yields novel results about the analytic cut property.
Moreover, we have identified the strengthening of the sufficient conditions that implies 
cut-elimination, thus showing that the latter is as a special case of cut-restriction.

\textit{Future work.} Maehara's method for the Craig interpolation property is not hindered by analytic cuts (see, e.g.,~\cite{KowOno17,OnoSano}). This motivates a general investigation into Craig interpolation for calculi that have the analytic cut property.

From the computational interpretation point of view, it would be interesting to investigate the meaning of our procedure within the Curry-Howard paradigm, and its possible connections with the notion of partial evaluation. 

Another research direction would be to generalize our conditions. Notice indeed that the modal calculus {\bf K} is not a standard sequent calculus under the definition presented here, as its modal rule has arbitrarily many principal formulas. The restriction to logical rules having a single principal formula (``simple rules'') served to simplify the notation in the main proof, but the argument can be extended to the case of {\bf K}. 
There are other rules whose form is not analytic that we would like to encompass, e.g. the peculiar rule of the modal logic {\bf B}, and bi-intuitionistic stable tense logic {\bf BiSKt} (known via semantic methods to have the analytic cut property~\cite{OnoSano}).
A further investigation would be to consider substructural logics where weakening and contraction might not be present. 
We are not aware of a substructural logic without cut-elimination that is complete for analytic cuts, but~\cite{CiaLanRam21} presents many substructural logics with a modified  subformula property.

Takano obtains a relaxation of the subformula property for several modal logics via semantics:  $K5$, $K5D$,  $S4.2$, $KD\#$~\cite{Tak01,Tak19,Tak20}. We would like to extend cut-restriction to such modified subformula properties. A broader aim is a \textit{general classification of logics in terms of
their modified subformula properties under cut-restriction}. This is a reimagining of structural proof theory: instead of constructing generalisations of the sequent calculus to get cut-elimination, aim for modifications of the subformula property. 
%
This would mean a single target for theory and applications 
including theorem proving, proof-assistants, and meta-theoretic argumentation.

\section*{Acknowledgement}

Work partially supported by the FWF project P33548 and EPSRC projects EP/S013008/1,  EP/R006865/1.

\bibliographystyle{abbrv}
\bibliography{mybib}
\section*{Appendix: proof of Lemma~\ref{lem:invertible}}

We prove that under the premises of Lemma~\ref{lem:invertible}, 
$\circ$ is \emph{size-preserving left-invertible}, 
that strenghtens 
Def.~\ref{def:invertibility} by the clause: 
the number of nodes in $\beta'$ is no larger than the number of nodes in $\beta$. 
The proof is by induction on the number of nodes in $\beta$. Let $\RR$ be the last inference in $\beta$.

\begin{enumerate}
\item \textit{$\RR$ is a multicut on $C$.}\\ Without loss of generality, the end of $\beta$ runs as follows where $\Gamma=\Gamma_1,\Gamma_2$ (the case that $\lbleft{C^\circ}$ is in the conclusion of $\delta_1$ is symmetric):
\[
\infer[(mcut)]{\Gamma_1,\Gamma_2,\lbleft{C^\circ}}
    	{
    	\deduce{\Gamma_1,\{\lbleft{C^\circ}\}^{p}}{\mydots{\delta_1}}
    	&
    	\deduce{\Gamma_2,\tms{\lbleft{C^\circ}}{q},\lbleft{C^\circ}}{\mydots{\delta_2}}
    	}
\]
Applying the induction hypothesis $(q+1)$-many times to $\delta_2$ we obtain a proof $\delta_2'$ of $\Gamma_2,(\Lambda_m)^{q+1}$. Note that the induction hypothesis can be applied multiple times because the size of the obtained proofs is always bounded by number of nodes in $\delta_1$. Now using contraction and weakening we obtain the proof $\beta'$ of $\Gamma_1,\Gamma_2,\Lambda_m$.

\item \textit{$\RR$ is not a multicut on $C$ and if $\RR$ is a simple rule, then $\lbleft{C^\circ}$ is not principal in it.}\\
The end of $\beta$ runs as follows
\[
    \infer{\Gamma,\lbleft{C^\circ}}
    	{
    	\deduce{\Gamma_1,\tms{C^\circ}{q_1}}{\mydots{\delta_1}}
    	&
    	\ldots
    	&
    	\deduce{\Gamma_n,\tms{C^\circ}{q_n}}{\mydots{\delta_n}}
    	}
\]
with all immediate ancestors of $\lbleft{C^\circ}$ highlighted in the premises. We construct $\beta'$ as follows
\[
    \infer{\Gamma,\Lambda_m}
    	{
    	\deduce{\Gamma_1,\tms{\Lambda_m}{q_1}}{\mydots{\delta_1'}}
    	&
    	\ldots
    	&
    	\deduce{\Gamma_n,\tms{\Lambda_m}{q_n}}{\mydots{\delta_n'}}
    	}
\]
where $\delta_i'$ is obtained from $\delta_i$ by $q_i$ applications of the induction hypothesis. By assumption and Lemma~\ref{lem:subst}, the lowest inference in $\beta'$ is again $\RR$.

If $\RR$ is a multicut then it can cease to be analytic moving from $\beta$ to $\beta'$, but only if the cut formula was a subformula of $C^\circ$. As we have ruled out the case that the cut formula is $C^\circ$, it must be a proper subformula.

\item \textit{$\lbleft{C^\circ}$ is principal.} \\
The end of $\beta$ runs as follows:
\begin{prooftree}
    \AxiomC{\mydots{\delta_1}}
    \noLine
    \UnaryInfC{$\Gamma,\Lambda_1'$}
    \AxiomC{$\ldots$}
    \AxiomC{\mydots{\delta_M}}
    \noLine
    \UnaryInfC{$\Gamma,\Lambda_M'$}
    \RightLabel{$(\circ\lleft)$}
    \TrinaryInfC{$\Gamma,\lbleft{C^\circ}$}
\end{prooftree}
By the uniqueness assumption $\Lambda_m'=\Lambda_m$. But this means we can simply take $\delta_m$ as $\beta'$.
\end{enumerate}

\end{document}